\newcommand{\N}{{\mathbb{N}}}
\newcommand{\A}{{\tt A}}
\newcommand{\B}{{\tt B}}
\newcommand{\trans}[1]{\stackrel{#1}{\longrightarrow}}
\newcommand{\Amin}{A_{\rm{min}}}
\newtheorem{theorem}{Theorem}
\newtheorem{prop}{Proposition}
\newtheorem{corollary}{Corollary}[theorem]
\newtheorem{lemma}{Lemma}
\newtheorem{conj}{Conjecture}
\theoremstyle{definition}
\newtheorem{definition}{Definition}
\theoremstyle{remark}
\newtheorem{example}{Example}
\newtheorem{remark}{Remark}
\begin{document}

\title{Decidability of
regular language genus computation}
\author{Guillaume Bonfante, Florian Deloup}

\date{}

\begin{abstract}
The article continues the study of the genus of regular languages that the authors introduced in a 2012 paper. Generalizing a previous result, we produce a new family of regular languages on a two-letter alphabet having arbitrary high genus.
Let $L$ be a regular language. In order to understand the genus $g(L)$ of $L$, we introduce the topological size of $|L|_{\rm{top}}$ to be the minimal size of all finite deterministic automata of genus $g(L)$ computing $L$.
We show that the minimal finite deterministic automaton of a regular language can be arbitrary far away from a finite deterministic automaton realizing the minimal genus and computing the same language, both in terms of the difference of genera and in  terms of the difference in size. In particular, we show that the topological size $|L|_{\rm{top}}$ can grow at least exponentially in size $|L|$. We conjecture however the genus of every regular language to be computable. This conjecture implies in particular that the planarity of a regular language is decidable, a question asked in 1976 by R.V. Book and A.K. Chandra. We prove here the conjecture for a fairly generic class of regular languages having no short cycles. 
\end{abstract}

 \maketitle

 \tableofcontents

\section{Introduction}

Regular languages form a robust and well-studied class of languages: they are recognized by finite
deterministic automata (DFA), as well as various formalisms such as Monadic Second-Order logic, finite monoids, regular expressions. Traditionally, the canonical measure of the complexity of a regular language is given by the number of states of its minimal deterministic automaton.

In this paper, we study an alternative measure of language complexity, with a more topological flavor. We will be interested in the topological complexity of underlying graph structures of deterministic automata recognizing the language. Recall that the genus of an oriented surface $\Sigma$ is the maximum number of mutually disjoint 
simple closed curves $C_1, \ldots, C_g \subset \Sigma$ such that the complement $\Sigma - (C_1 \cup \cdots \cup C_g)$ remains connected. This yields a natural notion of genus of a graph: a graph has genus $n$ if it is embeddable in a surface of genus $n$ but cannot be embedded in a surface of strictly smaller genus.

This definition was used in \cite{BD} to define the genus of a regular language $L$ as the minimal genus among underlying graphs of deterministic automata recognizing $L$. In particular, $L$ has genus $0$ if and only if it can be recognized by a planar deterministic automaton. Here we provide new hierarchies of regular languages based on the genus, including for regular languages on two letters (Theorem \ref{th:hierarchy-2-letters}).

One of the main questions is the computability of the genus of a regular language (Conjecture \ref{conj:computable} below). This conjecture implies the decidability of
the planarity of a regular language -- a question raised in 1976 by R.V. Book and A.K. Chandra \cite{BC}. In this paper, we prove the conjecture for the class of regular languages having no short cycle (Theorem \ref{th:computable-class}).

The complexity of the computation of the genus is reflected on the 
cost of extra states needed to build 
a deterministic automaton of minimal genus. We show that the number of states required may be exponential in the size of the minimal automaton of the language (Theorem \ref{th:exponential_size}).

An approach to the computation of the genus of a regular language $L$ consists in considering all possible underlying directed graphs of the automata computating the same language $L$. This leads to the notion of directed emulator of a graph. In several aspects the notion is both similar to and distinct from the classical notion of emulator of a graph (see, for instance, \cite{Hlineny10} for background and a survey on a related open question in graph theory). The main result is that the existence of a directed emulator of genus $g$ of the underlying directed graph of the minimal automaton of a regular language is equivalent to the existence of a deterministic automaton of genus $g$ computing the same language (Theorem \ref{th:genus-and-emulator}). 

\vspace{0.6cm}

\noindent {\textbf{Plan of the paper}}. Section \ref{sec:genus-topsize} provides background, definitions of genus and topological sizes, examples (including hierarchy based on genus, exponential gap between size and topological size) and main results including the computability of the genus of a regular language for a class of regular languages without short cycles. 
Section \ref{sec:diem} provides the set-up of
directed emulators and the main equivalence between finding the genus of a regular language $L$ and finding the minimal genus of directed emulators of the underlying directed graph of the minimal automaton for $L$. The proofs of the main results are collected in Section \ref{sec:proofs}.

\section{The genus and topological size of a regular language} 
\label{sec:genus-topsize}

\subsection{Introductory examples} \label{subsec:examples}

The Myhill-Nerode theorem provides
constructive existence and uniqueness of a deterministic finite automaton with minimal number
of states recognizing a given regular language.

\begin{definition} \label{def:reglang_cyclic_example}
For each $k \geq 1$,
we define the regular language on alphabet $\mathbb{Z}/k \mathbb{Z}$:
$$Z_{k}:=\{ a_1a_2\dots a_n~|~  \sum_{i=1}^n a_{i} \equiv 0 \mod\ k\}.$$
It will be convenient to denote
$Z_{k}^{a_{1}, \ldots, a_{r}}$ the regular language obtained from $Z_{k}$ by restriction to the subalphabet
$\{ a_{1}, \ldots, a_{r} \} \subseteq \mathbb{Z}/k{\mathbb{Z}}$. 

\end{definition}

\begin{example}
The language $Z_{5}^{0,1,2}$. The figure below depicts the minimal automaton $\A$. 
The transitions are of the form $i\trans{j} i+j\mod 5$.

\begin{figure}[!h]\label{fig:Z5012}
\begin{center}
\includegraphics[scale=0.7]{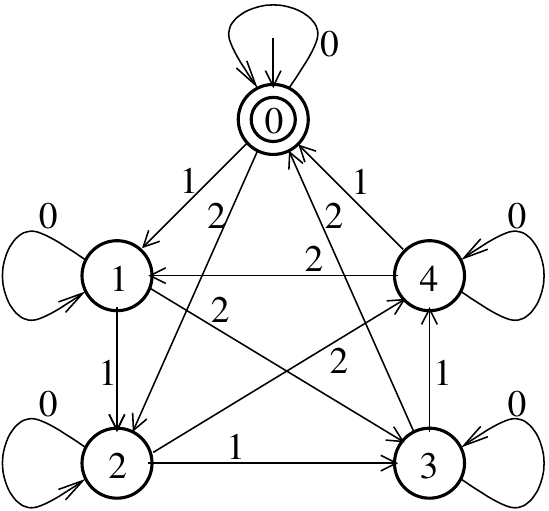}
\end{center}
\caption{The minimal automaton for the language $Z_{5}^{0,1,2}$.}
\end{figure}

Since it contains the complete graph $K_5$, $\A$ is not planar.
However, there exists a deterministic automaton with six states that is planar and computes the same language $L$:

\begin{figure}[!h]\label{fig:plan5}
\begin{center}
\includegraphics[scale=0.6]{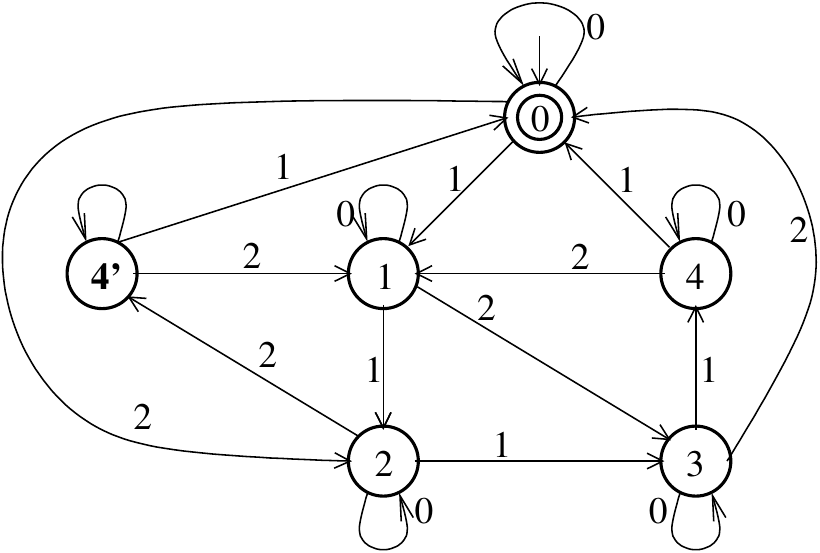}
\end{center}
\caption{A planar automaton $\B$ computing $L$. Note that states $4$ and $4'$ are equivalent.} 
\end{figure}
\end{example}

In the previous example, adding just an extra state suffices to produce a planar equivalent automaton.
The following example suggests that the general case may require many more states.

\begin{example}
The language $Z_{6}$. The figure below represents the minimal
deterministic finite automaton $\A$ 
computing $Z_{6}$. It state space is $\mathbb{Z}/6\mathbb{Z}$ and its transitions are $i\trans{j} i+j\mod 6$, for all $i, j \in \mathbb{Z}/6\mathbb{Z}$.

\begin{figure}[!h]
\begin{center}
\includegraphics[scale=0.7]{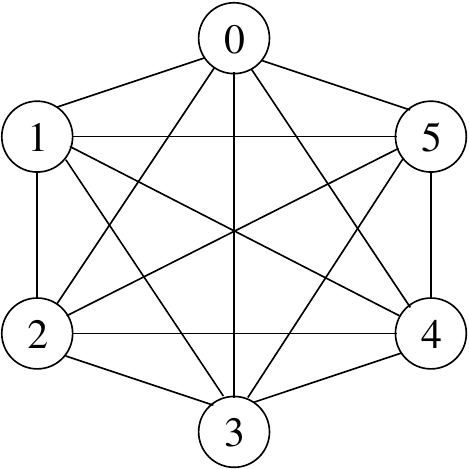}
\end{center}
\caption{The minimal automaton of $Z_{6}$. For simplicity, the self-loop labelled $0$ at each vertex is omitted and each edge represents two transitions in opposite directions.}
\end{figure}

There is
no planar representation for $\A$. (Since $\A$ has the complete graph $K_6$ as a minor, $\A$ is not planar.) However, there exists a deterministic automaton
with $12$ states that is planar and computes the same language
$L$ (Figure below). We regard the additional six states as the price to pay in
order to simplify the topology of an embedding of the automaton
into a surface. Since any $6$-state automaton has an underlying graph which is a subgraph of $Z_6$, it follows easily that any language of size $|L| \leq 6$ can be represented by a planar finite deterministic automaton with at
most $12$ states.

\begin{figure}[!h]
\begin{center}
\includegraphics[scale=0.5]{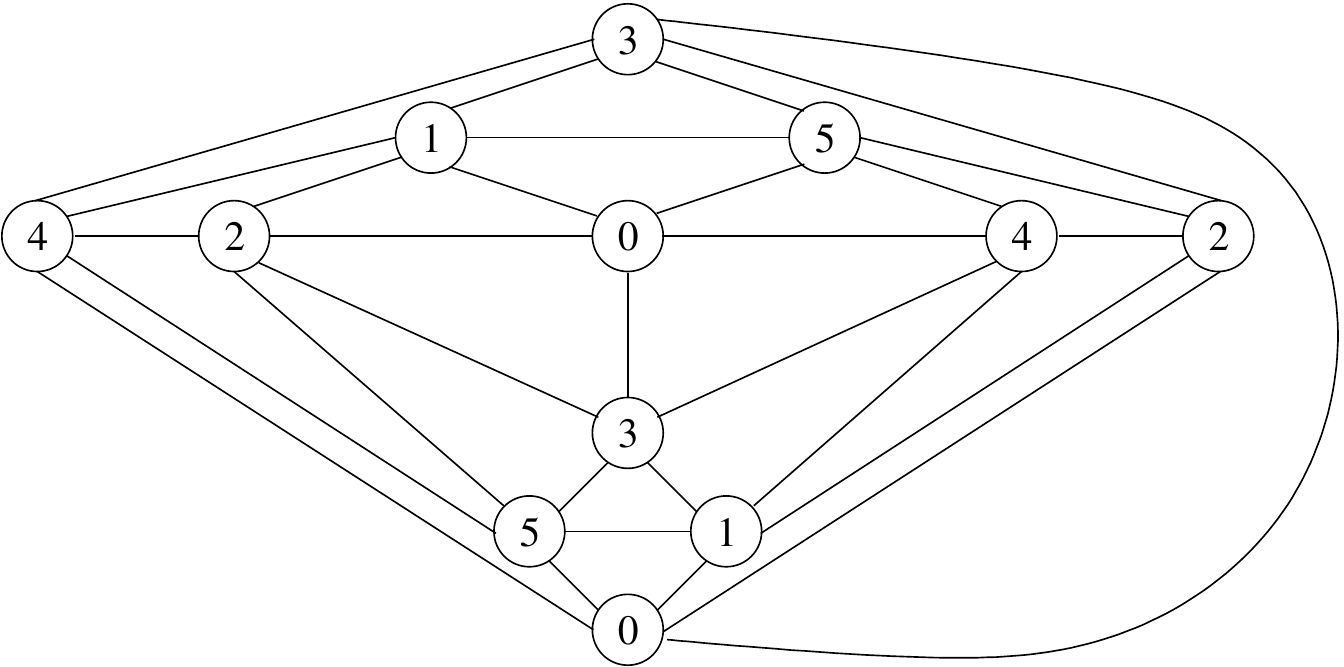}
\end{center}
\caption{A deterministic automaton of minimal genus (planar) recognizing the same
language $Z_6$ (with the same representation conventions as in the previous figure).} \label{fig:LK6}
\end{figure}

\end{example}

%



\subsection{Automata and graphs} \label{sec:automata_and_graphs}

Any automaton $\A$ gives rise to an undirected multigraph (by forgetting labels and orientations of transitions). 
Let $k \geq 1$. A {\emph{cycle of length}} $k$ in $\A$ is a closed walk in the underlying undirected multigraph of length $k$, considered up to circular permutation. Note that a cycle may or may not respect the orientation of the original transitions. We say that the cycle {\emph{respects the direction}} of (the underlying directed multigraph of) $\A$ if each oriented edge of the cycle 
respects the orientation of the original transition in $\A$.
A cycle of length $1$ is also called a {\emph{loop}} (or a {\emph{self-loop}}, for emphasis). A cycle is {\emph{simple}} if it is represented by a closed walk in which no edge is used more than once. In particular, a closed walk in which one edge is travelled twice in opposite directions does not induce a simple cycle. 
The {\emph{genus}} $g(\A)$ of an automaton $\A$ is defined as the genus of the underlying undirected multigraph (see, e.g., \cite[\S 1.4.6]{GT}).


\subsection{Genus-based hierarchies}

We start with the definition of the genus of a regular language \cite{BD}. 

\begin{definition}
Let $L$ be a regular language. Let DFA$(L)$ be the set of all deterministic automata computing $L$. The {\emph{genus}} $g(L)$ is
$$ g(L) = \min \{ g(\A) \ | \ \A \in {\rm{DFA}}(L) \}.$$
A regular language is said to be {\emph{planar}} (resp. {\emph{toric}}) if its genus is zero (resp. one). 
\end{definition}

In other words, the genus of a regular language is the minimal genus 
among all genera of all embeddings of all finite deterministic automata recognizing the language. There are many nonplanar languages. A hierarchy of languages of strictly increasing genus is explicitly constructed in \cite{BD}. 
We produce other examples of such 
a hierarchy (see also Remark \ref{rem:observations}):

\begin{theorem} \label{th:new-hierarchy-example}
Let $k \geq 4$.
The language $Z_{2k+1}^{1,2,\ldots, k}$ has genus $\lceil \frac{(2k-2)(2k-3)}{12} \rceil$. In particular, $g(Z_{2k+1}^{1,2,\ldots, k}) \underset{k \to +\infty}{\to} + \infty$.
\end{theorem}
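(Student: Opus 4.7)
The plan is to match a Ringel--Youngs upper bound coming from the minimal automaton against an Euler-formula lower bound valid for every DFA computing $L = Z_{2k+1}^{1,\ldots,k}$.

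For the upper bound, I identify the minimal automaton $\Amin$: by Myhill--Nerode its state set is $\mathbb{Z}/(2k+1)\mathbb{Z}$ with transitions $i \trans{j} i+j$ for $j \in \{1,\ldots,k\}$. Since $0$ is not in the alphabet and two labels in $\{1,\ldots,k\}$ never sum to $2k+1$, the underlying multigraph of $\Amin$ has neither loops nor parallel edges, and it carries exactly $k(2k+1) = \binom{2k+1}{2}$ edges. Hence it is the complete graph $K_{2k+1}$, and by Ringel--Youngs $g(\Amin) = g(K_{2k+1}) = \lceil (2k-2)(2k-3)/12 \rceil$, giving the upper bound on $g(L)$.

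For the lower bound, I let $\A$ be an arbitrary DFA with $V$ states computing $L$. The Myhill--Nerode quotient provides a transition-preserving surjection $\phi\colon V(\A) \to \mathbb{Z}/(2k+1)\mathbb{Z}$, which exhibits $\A$ as a directed emulator of $\Amin$ in the sense of Theorem \ref{th:genus-and-emulator}. The same arithmetic used above, now transported through $\phi$, shows that the underlying undirected multigraph of $\A$ is simple: there are no loops (would force $j=0$); no two parallel edges $q \to q'$ with different labels (determinism combined with $\phi$-compatibility); and no digon $q \to q' \to q$ (would force $j+j' = 2k+1$ with $j,j' \in \{1,\ldots,k\}$, impossible). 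Thus the underlying graph has $V \geq 2k+1$ vertices, $kV$ edges, and girth $\geq 3$. Euler's formula then yields
$$g(\A) \;\geq\; 1 - \frac{V}{2} + \frac{kV}{6} \;=\; 1 + \frac{V(k-3)}{6}.$$
For $k \geq 4$ the right-hand side is strictly increasing in $V$ and equals $(2k-2)(2k-3)/12$ at $V = 2k+1$, so together with $V \geq 2k+1$ and integrality, $g(\A) \geq \lceil (2k-2)(2k-3)/12 \rceil$, matching the upper bound.

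The main obstacle is the simple-graph claim for an \emph{arbitrary} DFA, not only for $\Amin$; without ruling out loops and digons, Euler's formula collapses and does not prevent large emulators from having small genus. The Myhill--Nerode quotient $\phi$ is what lets me export the arithmetic constraint $j+j' \leq 2k < 2k+1$ from $\Amin$ to any DFA computing $L$, and the hypothesis $k \geq 4$ appears precisely so that $1 + V(k-3)/6$ is monotone in $V$, ensuring that adding states can never drop the Euler bound below $\lceil (2k-2)(2k-3)/12 \rceil$.
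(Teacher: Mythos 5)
Your proof is correct and follows essentially the same route as the paper: the upper bound is the Ringel--Youngs genus of $K_{2k+1}$ realized by the minimal automaton, and the lower bound is the girth-$\geq 3$ Euler estimate $g \geq 1 + V(k-3)/6$ applied to an arbitrary (complete) DFA for $L$, which is monotone in $V$ for $k\geq 4$ and coincides with $(2k-2)(2k-3)/12$ at $V=2k+1$. The only difference is presentational: the paper obtains the no-short-cycle property of an arbitrary DFA from its directed-emulator lemma (Lemma \ref{lem:nosimplecycle}) and then invokes the general genus estimate (Theorem \ref{th:genus_estimate}), whereas you re-derive that special case directly from the arithmetic of the Myhill--Nerode projection.
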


Note the closed formula for the genus.  In general, the computation of the genus is nontrivial, as shall be explained further below. Note that the family of languages in Theorem \ref{th:new-hierarchy-example} has an increasingly large alphabet. The examples provided in \cite{BD} have a fixed $4$-letter (or more) alphabet. This left out regular languages on an alphabet with fewer letters, namely $2$ or $3$ letters. (Regular languages on a $1$-letter alphabet are easily seen to be planar. See e.g. \cite{BD}.) R.V. Book and A.K. Chandra have built a regular language on $2$ letters that is nonplanar \cite{BC}. We shall prove here the following result.

\begin{theorem} \label{th:hierarchy-2-letters}
There is a genus hierarchy of regular languages on only $2$-letters: for any nonnegative integer $n \geq 0$, there exists a regular language $L$ on a $2$-letter alphabet such that $g(L) = n$.
\end{theorem}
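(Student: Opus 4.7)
The strategy is to transfer the hierarchy $\{L_k = Z_{2k+1}^{1,2,\ldots,k}\}_{k \geq 4}$ from Theorem~\ref{th:new-hierarchy-example}, whose genera tend to infinity, to the binary alphabet via a prefix-free encoding. Define $\phi(j) := 0^{j-1}1$ for $j = 1, \ldots, k$, and let $L'_k := \phi(L_k) \subset \{0,1\}^*$. The main claim is that $g(L'_k) \to \infty$ as $k \to \infty$. To then realize every integer $n \geq 0$ exactly, one combines this unbounded family with a controlled genus-decrement: for instance, removing one generator from the defining circulant $Z_{2k+1}^{1,\ldots,k}$ before encoding shifts the resulting genus by a known amount (since the extremal formula of Theorem~\ref{th:new-hierarchy-example} is available explicitly), and iterating this step fills in the gaps to hit each nonnegative integer.

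The first task is to identify the minimal DFA $\A'_k$ of $L'_k$. It has $k(2k+1)$ states, partitioned into $2k+1$ \emph{main} states indexed by $\mathbb{Z}/(2k+1)\mathbb{Z}$, plus for each main state $q$ a \emph{spine} $q = p_0(q) \xrightarrow{0} p_1(q) \xrightarrow{0} \cdots \xrightarrow{0} p_{k-1}(q)$ together with $1$-labeled edges $p_i(q) \to q+i+1$. A direct Myhill--Nerode calculation---tracking right-quotient languages at each candidate state---shows these $k(2k+1)$ states are pairwise distinguishable. Contracting each spine to its root collapses $\A'_k$ to the minimal DFA of $L_k$, whose underlying graph is $K_{2k+1}$, so $g(\A'_k) \geq g(K_{2k+1}) = g(L_k)$; conversely, each spine tree can be drawn inside a small disk around its root vertex, giving $g(\A'_k) = g(L_k)$.

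The heart of the proof is the matching lower bound $g(L'_k) \geq g(L_k)$. By Theorem~\ref{th:genus-and-emulator}, $g(L'_k)$ equals the minimum genus over directed emulators $\pi \colon \B \to \A'_k$. Given such an emulator, the plan is to exhibit $K_{2k+1}$ as a minor of the underlying graph of $\B$ via branch sets $V_q$ selected from $\pi^{-1}(\{q, p_1(q), \ldots, p_{k-1}(q)\})$. Disjointness of $V_q$ across distinct main states is automatic from fiber disjointness; connectivity of each $V_q$ is produced by the lifted $0$-edges inside the spine preimages; and the lifted $1$-edges supply, for each $i \in \{0, \ldots, k-1\}$, an edge between $V_q$ and $V_{q+i+1}$. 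Since $\{\pm 1, \ldots, \pm k\} = \mathbb{Z}/(2k+1)\mathbb{Z} \setminus \{0\}$, every pair of branch sets is joined, yielding a $K_{2k+1}$ minor and therefore $g(\B) \geq g(K_{2k+1}) = g(L_k)$.

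The main obstacle is the coherent coordination of these branch-set choices: the $0$-successor maps $\pi^{-1}(p_i(q)) \to \pi^{-1}(p_{i+1}(q))$ need not be injective in a general emulator, so the preimage of a spine may split into several components, and the lifted $1$-edges out of a chosen strand of $V_q$ may land in a non-chosen component of $V_{q+i+1}$. To circumvent this, the plan is to pass to a sufficiently saturated cover of $\B$---for example, an iterated fiber product that is invariant under the $\mathbb{Z}/(2k+1)\mathbb{Z}$-shift symmetry on main states---in which coherent strand selections exist. Since covering a graph multiplies its Euler characteristic by the covering degree, the genus of the cover relates linearly to that of $\B$, so the resulting bound $g(\B) \geq g(K_{2k+1})$, possibly up to a constant factor, still forces $g(\B) \to \infty$, completing the argument.
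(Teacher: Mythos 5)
Your approach is genuinely different from the paper's and, as written, has a gap that I don't see how to close. The paper does \emph{not} transfer a large-alphabet hierarchy to the binary alphabet; instead it builds a new $2$-letter automaton $\A_k$ directly (states $\mathbb{Z}/6\mathbb{Z} \times \mathbb{Z}/k\mathbb{Z}$, transitions $(i,j)\overset{0}{\to}(i+1,j)$ and $(i,j)\overset{1}{\to}(2i,j+1)$), checks that $\A_k$ is minimal and has no simple cycle of length $\leq 4$, and then invokes Theorem~\ref{th:genus_estimate} to obtain $g(L_k)\geq 1+\tfrac{3k}{20}$. The whole point of that design is that the no-short-cycle hypothesis makes the genus lower bound for the \emph{language} (not just one automaton) immediate. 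Your prefix-free encoding $\phi(j)=0^{j-1}1$ destroys exactly that feature: the spines introduce lots of $2$-cycles (each $0$-edge together with the parallel $1$-edge landing elsewhere is not a $2$-cycle, but the bidirected edges coming from the ``$0$-path plus back-pointing $1$-edges'' structure will, and at the very least you lose the hypothesis of Theorem~\ref{th:genus_estimate}), so you cannot reuse the paper's genus estimate and must produce an entirely new lower bound for $g(L_k')$.

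That lower bound is where the proposal breaks. You correctly identify the obstruction: in a general directed emulator $\pi\colon\B\to\A_k'$, the $0$-successor maps between fibers are not injective, so the preimage of a spine can shatter and the lifted $1$-edges need not land in the chosen strand. Your repair is to pass to a cover $\B'$ of $\B$ and exhibit a $K_{2k+1}$ minor in $\B'$. But this only gives $g(\B')\geq g(K_{2k+1})$; it says nothing about $g(\B)$. For graphs there is no inequality $g(\B)\geq c\cdot g(\B')$ forced by a covering $\B'\to\B$: the genus of a cover can be much larger than that of the base (a planar graph has covers of arbitrarily high genus), and Euler characteristic multiplicativity does not translate into any bound of the form you need. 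So even granting the minor in the cover, the conclusion ``$g(\B)\to\infty$'' does not follow. Since $g(L_k')=\min_{\B} g(\B)$ over emulators $\B$ of $G(L_k')$ (Theorem~\ref{th:genus-and-emulator}), you have not ruled out that some emulator of low genus exists.

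Two smaller issues. First, the claim $g(\A_k')=g(K_{2k+1})$ is asserted via ``draw each spine inside a small disk,'' but splitting a vertex of $K_{2k+1}$ into a path with a prescribed arrangement of incident edges need not preserve genus: the cyclic order of edges around the vertex in a minimal embedding of $K_{2k+1}$ may be incompatible with the spine's attachment pattern, and then $g(\A_k')>g(K_{2k+1})$. This is harmless for a lower bound but means even the upper bound you want is not free. Second, the ``fill in every integer $n$'' step (delete generators from the circulant, claim the genus decrements by a controlled amount) is not justified: removing a generator changes $K_{2k+1}$ to a circulant graph whose genus is not given by the Ringel--Youngs formula, and there is no reason the encoded language's genus steps down by $1$. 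The paper's argument has the same need (unbounded genus does not by itself hit every $n$), but at least the paper's explicit family with a clean genus estimate is a much more promising starting point than a family whose genus you only control up to an unproved covering argument. To salvage your route you would need a direct lower bound for $g(L_k')$ that works at the level of emulators of $G(L_k')$ themselves, not of a cover — and that is precisely what Theorem~\ref{th:genus_estimate} provides once one designs the $2$-letter automaton to avoid short cycles, which is what the paper does.
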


The result is constructive and explicit; it also implies the existence of a genus hierarchy of regular languages on any $k$-letter alphabet for $k \geq 2$ (since self-loops with arbitrary labels can always be added without affecting the genus).

\subsection{Genus and topological size}

Given a regular language $L$, we let $\Amin(L) = \Amin$ be the minimal deterministic automaton associated to $L$. The {\emph{size}} $|L|_{\rm{set}}$ of the language $L$ is the size of the minimal deterministic automaton 
$\Amin$: $$ |L|_{\rm{set}} = |\Amin|.$$ 

\begin{definition}
We define the {\emph{topological size}} of $L$ to be
$$ |L|_{\rm{top}} = \min \{ |A| \ | \ L(A) = L, \ g(A) = g(L)\}$$ where the minimum is taken over all finite deterministic automata recognizing $L$ of minimal genus.
\end{definition}

By definition $|L|_{\rm{top}} \geq |L|_{\rm{set}}$ with equality if and only if the minimal automaton realizes the genus of $L$. From \cite[\S 5]{BD} we know that the topological size is in general reached by several inequivalent deterministic automata. In light of the previous examples, a number of natural questions arise. What is the trade-off between size and genus ? Can a regular language be planar and its minimal automaton
have an arbitrary high genus ? Indeed, 
the following result shows that the topological size of $L$ can grow at least exponentially in terms of (set-theoretic) size of $L$:

\begin{theorem}\label{th:exponential_size}
There is a family of planar regular languages $(L_n)_{n \in \N}$ and a positive number $K>1$ such that $$ |L_{n}|_{\rm{top}} = O(K^{|L_n|_{\rm{set}}}).$$
\end{theorem}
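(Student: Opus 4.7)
The strategy is to exhibit an explicit family $(L_n)_{n \geq 1}$ of planar regular languages whose minimal automata have size $\Theta(n)$, but such that any planar deterministic automaton recognizing $L_n$ has at least $c \cdot K^n$ states for some constants $K>1$ and $c>0$. This yields the claimed exponential gap between $|L_n|_{\rm set}$ and $|L_n|_{\rm top}$.

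First, I would construct $L_n$ by describing its minimal automaton $\Amin$ as a chain of $n$ interacting copies of a small ``non-planar but planarizable'' module such as the automaton of $Z_6$ from the introductory examples, with successive modules coupled through a shared interface state. The coupling is designed so that $|\Amin| = \Theta(n)$, that different states of $\Amin$ correspond to genuinely different Myhill--Nerode classes, and that each inter-module transition forces the next module to ``remember'' information about the current one.

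Second, I would exhibit a planar DFA recognizing $L_n$, obtained by recursively replacing each module by its $12$-state planar substitute (cf.\ Figure \ref{fig:LK6}) and threading the interface states appropriately. Because each planarization must duplicate states at the interface with the following module, the resulting automaton has size $O(K_0^n)$ for some explicit $K_0>1$. This both confirms that $g(L_n)=0$ and gives the desired upper bound on $|L_n|_{\rm top}$.

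Third, and most importantly, I would establish a matching exponential lower bound. By Theorem \ref{th:genus-and-emulator}, any planar DFA computing $L_n$ gives rise to a planar directed emulator of the underlying directed graph of $\Amin$. I would prove by induction on $n$ that every planar directed emulator of this graph has at least $c \cdot K^n$ vertices: removing the outermost module splits the emulator into pieces whose embeddings must cover all possible transitions with the next module, and a local analysis of rotation systems around each interface vertex shows that the number of preimages of each interface state at least doubles at each layer. The main obstacle is this interface lemma, which must be proved at the level of combinatorial embeddings rather than abstract graphs: one has to show that genuinely distinct planar embeddings of the ``resolved'' module are forced by the attachment of the next module, with no way to merge interface states without reintroducing a crossing. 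Once this is established, the exponential lower bound follows by iteration.
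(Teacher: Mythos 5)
Your overall strategy — a layered ``cascade'' of small gadgets, with an inductive argument that any planar recognizer must blow up by a fixed factor $K>1$ per layer — is the same shape as the paper's argument. However, the proposal has two genuine gaps, and the route you sketch for the crucial step is harder than necessary.

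First, the construction of $L_n$ is left too vague to carry the weight it must bear. You propose chaining $n$ copies of $Z_6$ ``coupled through a shared interface state,'' but you do not specify the language, verify that the chain has $\Theta(n)$ Myhill--Nerode classes, or establish that the result is planar. In particular, there is no reason a priori that recursively substituting the $12$-state planar $Z_6$-gadget and ``threading the interface states'' produces a globally planar automaton: a union of planar pieces glued along shared vertices need not be planar, and whether it is depends entirely on the glueing, which you have not described. The paper sidesteps all of this by choosing $L_n$ to be a \emph{finite} language (words of length exactly $n+2$ on a $5$-letter alphabet, with the last letter the sum of the others), so planarity is immediate: a tree-shaped DFA of size $O(5^n)$ recognizes it. Planarity is therefore free and the only substance is the lower bound.

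Second, and more importantly, your ``interface lemma'' — that a local analysis of rotation systems forces the preimage of each interface state to at least double per layer — is precisely the content of the theorem, and you give no argument for it. Rotation systems are a heavy tool here and it is not clear how to extract a state-count lower bound from them directly. The paper's argument is cleaner and purely combinatorial: consider the bipartite subgraph $S_j$ of a putative planar DFA $B$ between layers $j$ and $j+1$. Injectivity of $\delta_{\min}$ on internal layers lifts to injectivity of the transition map of $B$, so $S_j$ has exactly $5k$ edges (where $k = |$layer $j|$) and no bigons; bipartiteness rules out odd faces. Hence every face of $S_j$ (in any planar embedding) has length at least $4$, and Euler's formula $k+m-5k+f = 2$ combined with $2\cdot 5k = \sum_{i} i f_i \ge 4f$ gives $m \ge \tfrac{3}{2} k$. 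Iterating, layer $n$ has $\ge 5\cdot(3/2)^n$ states. This is the missing ingredient in your step 3; without it (or some substitute of comparable precision) the exponential lower bound is unproved.
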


The construction consists in building a sequence of planar languages $L_{n}$ having increasingly high genus minimal automata $\Amin(L_{n})$. 
The language $L_{n}$ will be finite, so there will be a spanning tree for $L_{n}$, ensuring planarity, while the high genus of the minimal automaton is produced by means of a cascade of $n$ directed $K_{5,5}$'s, completed by one initial state and one single final state. See \ref{sec:exponential_size} for details of the proof.

In order to study the genus of a regular language, we introduce some classes of regular languages that ``do not have short cycles''.

\begin{definition}
Let $j \geq 1$. A language $L$ is said to be {\emph{without simple cycle of length}} $\leq j$ if the underlying undirected multigraph of the minimal deterministic automaton $\Amin$ for $L$ has no simple cycle of length $k \leq j$.
\end{definition}

\begin{example}  The language $Z_{5}^{1,2}$ has no simple cycle of length $\leq 2$. Indeed, the minimal automaton for $Z_{5}^{1,2}$ is the one depicted on Fig.~\ref{fig:Z5012} with all self-loops removed.
\end{example}

\begin{remark}
The role of the alphabet is crucial. The language $L_1 = Z_{3}^{1} = (\{ 1 \}^{3})^{*} = (111)^{*}$ has no simple cycle of length $\leq 2$ while the language $L_2 = (\{1,2\}^{3})^{*}$ does have simple cycles of length $2$. 

\begin{figure}[!h]
\begin{center}
\includegraphics[scale=0.5]{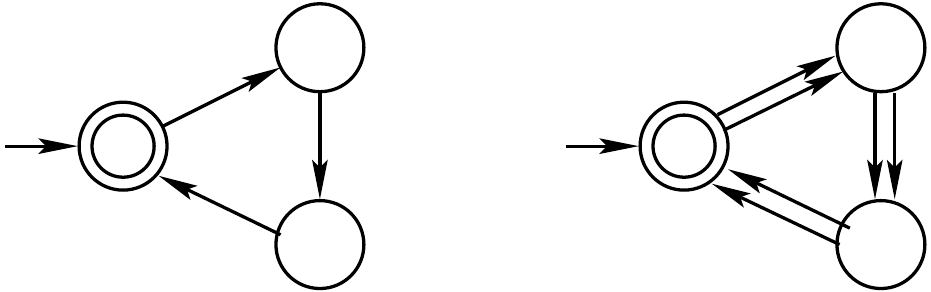}
\end{center}\caption{The minimal automata for $L_1$ and $L_2$. Note that they have the same underlying simple directed graph.}
\end{figure}
\end{remark}

It will be convenient to introduce the following function.

\begin{definition}
Let $m \geq 2$. Set $\rho(m) = \left\{ 
\begin{array}{cl}
3 & {\hbox{if}}\ m \geq 4;\\
4 & {\hbox{if}}\ m = 3;\\
5 & {\hbox{if}}\ m = 2.
\end{array}
\right.$
\end{definition}

\begin{theorem}[Genus estimate] \label{th:genus_estimate}
Let $m \geq 2$. 
If a regular language $L$ on an $m$-letter alphabet has no simple cycle of length $\leq \rho(m)-1$, then
\begin{equation}
1 + \frac{(\rho(m)-2)m-\rho(m)}{2\rho(m)}|L|_{\rm{set}} \leq g(L) \leq 1 + \frac{(m-1)}{2}|L|_{\rm{set}}.
\end{equation}
\end{theorem}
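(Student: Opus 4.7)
The plan is to prove each inequality separately by applying Euler's formula to a well-chosen multigraph.

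For the upper bound, I would invoke the general estimate $\gamma(G) \leq \lfloor (E - V + 1)/2 \rfloor$ valid for every connected multigraph with $V$ vertices and $E$ edges (obtainable, for instance, via a one-face cellular embedding through the maximum-genus bound). Applying it to the underlying multigraph of $\Amin$, which has $n = |L|_{\rm{set}}$ vertices and $nm$ edges, yields $g(\Amin) \leq 1 + (m-1)|L|_{\rm{set}}/2$, and the bound on $g(L)$ then follows from $g(L) \leq g(\Amin)$.

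For the lower bound, I would fix an arbitrary DFA $\A$ recognizing $L$ with $n' \geq n$ states (so that its underlying multigraph has $n'$ vertices and $n'm$ edges) and let $\pi: \A \to \Amin$ be the surjective DFA morphism obtained by quotienting by Nerode equivalence. The central step is to show that the underlying undirected multigraph of $\A$ has girth at least $\rho(m)$. I would argue by contraposition: if $\A$ contains a simple cycle $C$ of length $k \leq \rho(m)-1$, then pushing $C$ forward along $\pi$ produces a closed walk of length $k$ in $\Amin$, from which I would extract a simple cycle of $\Amin$ of length at most $k$, contradicting the hypothesis on $\Amin$. The extraction proceeds by passing to the shortest non-trivial closed sub-walk: by minimality its consecutive vertices are pairwise distinct, and the DFA property (a unique outgoing transition per label at each state) then forces the corresponding edges of $\Amin$ to be pairwise distinct as well---two projected edges sharing source vertex and label in $\Amin$ would be the same transition and hence the same multigraph edge---so the sub-walk is genuinely a simple cycle.

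With the girth of $\A$ bounded below by $\rho(m)$, Euler's formula $V - E + F = 2 - 2 g(\A)$ combined with the face-size bound $\rho(m) F \leq 2 E$ (every face is bounded by a closed walk of length at least the girth) gives
\[
g(\A) \geq 1 + \frac{(\rho(m)-2)m - \rho(m)}{2\rho(m)} \, n' \geq 1 + \frac{(\rho(m)-2)m - \rho(m)}{2\rho(m)} \, |L|_{\rm{set}},
\]
using $n' \geq n$ and the strict positivity of the coefficient for each admissible pair $(m, \rho(m))$ (namely $m-3$, $2$, $1$ in the three cases). Taking the infimum over $\A$ recognizing $L$ gives the lower bound on $g(L)$.

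The principal obstacle is the girth-transfer step: distinct edges of $\A$ may project to the same edge of $\Amin$, so the projected walk can in principle degenerate into back-and-forth tracings along a subtree of $\Amin$. Controlling this requires careful use of Nerode-minimality of $\Amin$ and the locally bijective behavior of $\pi$ on outgoing labeled edges, so that the shortest closed sub-walk is a genuine simple cycle and not a degenerate trace; this is delicate in particular when $m = 2$ and $\rho(m) = 5$, where extra Nerode-equivalent states in $\A$ most readily create ``fake'' short cycles whose projections collapse.
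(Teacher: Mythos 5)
Your overall architecture is the same as the paper's: upper bound from Euler's formula applied to $\Amin$, lower bound from Euler's formula plus a face-length bound applied to a complete automaton realizing the genus, after transferring the no-short-simple-cycle hypothesis from $\Amin$ to that automaton. The Euler bookkeeping and the positivity check on the coefficient are fine. The problem is that the step you yourself single out as the crux --- the girth transfer --- is not actually carried out, and your sketched extraction fails. The shortest nontrivial closed sub-walk of the projected walk can be a length-$2$ backtrack $u \to v \to u$ along a \emph{single} edge of $\Amin$: this happens exactly when the simple cycle of $\A$ traverses consecutively two distinct edges $y_1' \overset{a}{\to} x'$ and $y_2' \overset{a}{\to} x'$ with $\pi(y_1')=\pi(y_2')$, both of which project to the same transition of $\Amin$. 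The consecutive vertices of that sub-walk are distinct, yet it is not a simple cycle, because one edge is used twice; your parenthetical remark that two projected edges with the same source and label are the same multigraph edge is precisely the obstruction, not its resolution. Concretely, the intermediate claim ``every DFA recognizing $L$ has girth $\geq \rho(m)$'' is false as you state it: duplicate a state $y$ of $\Amin$ into $y_1', y_2'$ with identical outgoing transitions and redistribute the incoming ones; the resulting deterministic automaton recognizes $L$ and contains the simple $4$-cycle $y_1' \overset{a}{\to} \delta(y,a) \overset{a}{\leftarrow} y_2' \overset{b}{\to} \delta(y,b) \overset{b}{\leftarrow} y_1'$, whose projection is a doubled tree containing no simple cycle of $\Amin$. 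So the contraposition cannot be completed for arbitrary $\A$. The paper routes this step through its Lemma \ref{lem:nosimplecycle} on directed emulators; you would need a result of at least that strength (or a restriction of the argument to the faces of the embedding rather than to all short cycles of $\A$) before the lower bound can be asserted.

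There is a second, independent gap in the face bound. You justify $\rho(m) F \leq 2E$ by saying that ``every face is bounded by a closed walk of length at least the girth.'' A face boundary is a closed walk, not a simple cycle, and its length is not controlled by the girth: a monofacial edge traversed twice bounds a face of length $2$ in a graph of arbitrarily large girth, and length-$3$ and length-$4$ faces with nonsimple boundaries exist as well (the paper devotes a remark and a figure to exactly these configurations). The missing argument, which the paper makes explicit, is that in a complete deterministic automaton on $m \geq 2$ letters a face of length $\leq 4$ must in fact be bounded by a simple cycle --- e.g.\ an immediately backtracked monofacial edge would force an endpoint of total degree $1$, impossible when every state has out-degree $m$. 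Without establishing $f_1 = \cdots = f_{\rho(m)-1} = 0$ for the faces (not merely the absence of short simple cycles), the inequality $\rho(m) F \leq 2E$ and hence the whole lower bound is unjustified.
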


The upper bound is a direct consequence of Euler's formula. The crucial information consists in the lower bound (which is always greater than one). Theorem \ref{th:genus_estimate} generalizes that of \cite[Theorem 8]{BD}. See \S \ref{th:genus_estimate} for the detailed proof. 

\begin{conj}
\label{conj:computable}
The genus $g(L)$ of every regular language $L$ is computable.
\end{conj}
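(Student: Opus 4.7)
Since Conjecture \ref{conj:computable} is open, any plan can only describe a scheme and locate where the difficulty lies; the authors themselves settle only the partial case of Theorem \ref{th:computable-class}. The plan is to reduce the computation of $g(L)$ to a finite search by invoking Theorem \ref{th:genus-and-emulator}, which identifies $g(L)$ with the minimum genus of a directed emulator of the underlying directed graph $\Gamma$ of $\Amin(L)$. Since the genus of a fixed finite graph is decidable, the conjecture would follow from an effective upper bound $B(\Gamma)$ on the number of vertices of a smallest directed emulator of $\Gamma$ that realizes the minimum genus: given such $B(\Gamma)$, one enumerates all directed graphs on at most $B(\Gamma)$ vertices, keeps those that are directed emulators of $\Gamma$, computes each one's genus, and returns the minimum.

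To produce the bound $B(\Gamma)$, I would start from an arbitrary minimum-genus directed emulator $E$ of $\Gamma$ and try to shrink it by merging pairs of vertices $v, w$ projecting to the same vertex of $\Gamma$ and sharing a common local structure (for instance, isomorphic out-stars and consistent cyclic orderings in some embedding of $E$ that realizes $g(E)$). Each merge must be shown to preserve both the emulator condition and the genus upper bound. A ``reduced'' minimum-genus emulator obtained by iterating merges would then be the object whose size we want to control; combining this with Euler's formula applied to $E$ should translate the combinatorial rigidity of the reduction into an effective bound in terms of $|\Gamma|$ and $g(\Amin(L))$ (both computable).

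The main obstacle is twofold. First, Theorem \ref{th:exponential_size} shows that $|L|_{\rm{top}}$ can grow exponentially in $|L|_{\rm{set}}$, so any bound $B(\Gamma)$ must accommodate exponential blowup, ruling out naive polynomial reductions. Second, and more fundamentally, there is no a priori reason that every minimum-genus directed emulator admits a local merge: one could in principle have ``rigid'' minimum-genus emulators of $\Gamma$, irreducible in the sense above yet of unbounded size. This is precisely why Theorem \ref{th:computable-class} restricts to regular languages without short cycles, where the lower bound of Theorem \ref{th:genus_estimate} forces $g(L)$ to grow linearly with $|L|_{\rm{set}}$; combined with Euler's formula applied to any minimum-genus emulator, this converts the problem into a bounded search and sidesteps the rigidity issue. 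Extending the argument to arbitrary $L$ seems to require either new combinatorial constraints on minimum-genus directed emulators or an algebraic invariant of $L$ that bounds $|L|_{\rm{top}}$ directly in terms of $|L|_{\rm{set}}$.
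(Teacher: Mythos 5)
Your proposal correctly identifies that this statement is an open conjecture with no proof in the paper, and your account of the one case that \emph{is} settled --- using the lower bound of Theorem \ref{th:genus_estimate} for languages without short cycles to bound the size of any minimum-genus automaton (equivalently, directed emulator) and thereby reduce to a finite, decidable search --- is exactly the argument the paper gives for Theorem \ref{th:computable-class}. The additional speculation about merging locally isomorphic vertices of a minimum-genus emulator goes beyond what the paper attempts, but you clearly flag it as a plan rather than a proof, so there is nothing to correct.
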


Although the genus of a graph is computable, this conjecture is
far from being obvious. Indeed, given a graph $G$ and a
nonnegative number $g$, there is a procedure, polynomial in time,
that decides whether $G$ embeds into a fixed surface of genus $g$
and if is the case, determines an embedding (not uniquely
determined). The known procedure is linear in the size of the
graph (number of states) and doubly exponential in the genus $g$.
However, this is not enough in order to say anything about the
genus of a {\emph{language}} $L$: it is recognized by an infinity of
deterministic finite automata and since the genus may be realized far away from the minimal deterministic finite automaton $\A_{\rm{min}}$, it is not a priori clear where and when to stop. How far ? According to Theorem \ref{th:exponential_size}, we may need to go after an automaton whose size is at least exponential in the size of $L$. In order to prove the conjecture, one needs a priori bounds that depend on the intrinsic complexity (ideally the size) of the language.

We prove a partial case of the conjecture above.

\begin{definition}
Let $m \geq 2$. Let ${\mathscr{C}}(m)$ be the class of regular languages on $m$ letters without simple cycles of length $\leq \rho(m)-1$.
\end{definition}

%


%

\begin{theorem} \label{th:computable-class}

%
%

Let $m \geq 2$. For each $L \in {\mathscr{C}}(m)$, the topological size $|L|_{\rm{top}}$ and the genus $g(L)$ are computable.
\end{theorem}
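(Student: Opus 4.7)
The strategy is to obtain an effective upper bound on the size of any DFA achieving the genus $g(L)$, reducing the computation of both $g(L)$ and $|L|_{\rm{top}}$ to a finite enumeration.

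The first step is the structural lemma: \emph{if $L \in \mathscr{C}(m)$, then every DFA $\A$ computing $L$ has no simple cycle of length $\leq \rho(m) - 1$ in its underlying undirected multigraph.} I would prove this via the canonical Myhill--Nerode quotient $\phi \colon \A \to \Amin$, which sends each transition of $\A$ to the transition of $\Amin$ carrying the same label. A simple cycle $v_0, v_1, \ldots, v_k = v_0$ in $\A$ projects to a closed walk $\phi(v_0), \phi(v_1), \ldots, \phi(v_k)$ of length $k$ in $\Amin$. Since $\rho(m) - 1 \geq 2$, the hypothesis on $L$ forbids self-loops in $\Amin$, so consecutive vertices of the projected walk cannot coincide. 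Choosing a pair $i < j$ minimizing $j - i$ with $\phi(v_i) = \phi(v_j)$, if any, produces either a simple cycle of length $j - i \geq 3$ in $\Amin$, or a length-$2$ cycle formed by two distinct parallel edges; either way one obtains a simple cycle in $\Amin$ of length $\leq k$, forcing $k \geq \rho(m)$.

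With the lemma in hand, the Euler formula argument proving Theorem \ref{th:genus_estimate} applies verbatim to any such $\A$, not just to $\Amin$: each face in a minimum-genus embedding has boundary length $\geq \rho(m)$, whence
\begin{equation*}
g(\A) \; \geq \; 1 + \frac{(\rho(m)-2)m - \rho(m)}{2 \rho(m)} \, |\A|.
\end{equation*}
Combined with the upper bound $g(L) \leq 1 + \frac{m-1}{2} |L|_{\rm{set}}$ from Theorem \ref{th:genus_estimate}, any DFA $\A$ with $g(\A) = g(L)$ must satisfy $|\A| \leq C(m) \cdot |L|_{\rm{set}}$, where
\begin{equation*}
C(m) \; = \; \frac{(m-1) \, \rho(m)}{(\rho(m)-2) \, m - \rho(m)}
\end{equation*}
is finite and positive since the denominator is positive for every $m \geq 2$ (one checks the cases $m=2$, $m=3$ and $m \geq 4$ directly from the definition of $\rho$).

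The computation of $g(L)$ and $|L|_{\rm{top}}$ then reduces to a finite search: enumerate all complete DFAs on the $m$-letter alphabet with at most $\lfloor C(m) \, |L|_{\rm{set}} \rfloor$ states, retain those that recognize $L$ (checked by minimization and comparison with $\Amin$), and compute the genus of each (the genus of a finite graph being algorithmically computable). The minimum value is $g(L)$, and the smallest $|\A|$ among the attaining DFAs is $|L|_{\rm{top}}$; equivalently, by Theorem \ref{th:genus-and-emulator}, one may enumerate directed emulators of $\Amin$ with vertex count bounded by $C(m) \, |L|_{\rm{set}}$. The main obstacle I anticipate is the structural lemma: the quotient $\phi$ behaves cleanly on transitions but may identify distinct edges of the undirected multigraph of $\A$ with a single edge of $\Amin$, and one must verify that no such identification can collapse a short simple cycle of $\A$ into a null-homotopic walk in $\Amin$. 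The no-short-cycle hypothesis on $\Amin$, together with the fact that transitions carrying different labels remain distinct under $\phi$, is precisely what prevents this.
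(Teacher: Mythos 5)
Your proof is correct and follows essentially the same route as the paper: both establish that any minimal-genus DFA for $L$ inherits the absence of short simple cycles from $\Amin$ (the paper via Lemma~\ref{lem:nosimplecycle} and the directed-emulator framework, you via a direct Myhill--Nerode projection argument that unwinds the same content), then apply the Euler-formula lower bound of inequality~(\ref{eq:key_genus_ineq}) together with the trivial upper bound $g(L) \leq g(\Amin)$ to obtain a computable bound on $|\A|$, and finally reduce to a finite enumeration. Your explicit constant $C(m)$ is just a closed-form version of the paper's finite set $E$, and your closing caveat about the quotient collapsing edges is exactly the delicate point handled (somewhat tersely) by the paper's Lemma~\ref{lem:nosimplecycle}, so the two proofs stand or fall together there.
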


\begin{corollary}
The planarity of a regular language $L \in \mathscr{C}(m)$ for $m \geq 2$ is decidable.
\end{corollary}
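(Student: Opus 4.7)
The plan is to deduce the corollary directly from Theorem~\ref{th:computable-class}. Given a regular language $L$, the decision procedure proceeds in three steps: (i) verify that $L \in \mathscr{C}(m)$; (ii) compute $g(L)$; (iii) report ``planar'' if and only if $g(L) = 0$.

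For step (i), starting from any DFA for $L$, the classical minimization algorithm produces the finite automaton $\Amin$. Checking that its underlying undirected multigraph has no simple cycle of length $\leq \rho(m)-1$ amounts to enumerating finitely many short closed walks in a finite graph, which is plainly decidable. Step (ii) is exactly the content of Theorem~\ref{th:computable-class}, and step (iii) is then the trivial test $g(L) \stackrel{?}{=} 0$.

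In fact, the full computability statement of Theorem~\ref{th:computable-class} is overkill for planarity alone: the lower bound of Theorem~\ref{th:genus_estimate} already forces, for every $L \in \mathscr{C}(m)$,
\[
g(L) \;\geq\; 1 + \frac{(\rho(m)-2)m - \rho(m)}{2\rho(m)}\,|L|_{\rm{set}} \;>\; 1,
\]
since $|L|_{\rm{set}} \geq 1$ and the coefficient of $|L|_{\rm{set}}$ is strictly positive for all $m \geq 2$. As $g(L)$ is an integer, this gives $g(L) \geq 2$, so $L$ is automatically non-planar. The decision procedure may therefore simply output ``non-planar'' once membership in $\mathscr{C}(m)$ has been confirmed. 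Consequently, no genuine obstacle arises beyond that already overcome in Theorems~\ref{th:genus_estimate} and~\ref{th:computable-class}; the corollary packages their consequences for the Book--Chandra planarity question within the class $\mathscr{C}(m)$.
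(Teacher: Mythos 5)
Your proof is correct and follows the paper's (implicit) argument: the corollary is an immediate consequence of Theorem~\ref{th:computable-class}, since one computes $g(L)$ and tests whether it equals $0$. Your additional observation that the test is vacuous --- the positive lower bound of Theorem~\ref{th:genus_estimate} forces $g(L) \geq 2$ for every $L \in \mathscr{C}(m)$, so the answer is always ``non-planar'' --- is also correct and is in fact recorded in the paper as the corollary to Theorem~\ref{th:finiteness}.
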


Since regular languages are (partially) ordered by their genus,
the following finiteness result is useful.

\begin{theorem} \label{th:finiteness}
Let $m \geq 2$. If $\A$ is a deterministic finite automata $\A$ without simple cycles of length $\leq \rho(m)$, then $g(\A) \geq 2$.
Furthermore, for each $g \geq 2$, there is a finite number of deterministic finite automata $\A$ without simple cycles of length $\leq j(m)$ such that $g(\A) = g(L(\A))$.
\end{theorem}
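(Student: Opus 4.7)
The plan is to apply Euler's formula directly to the underlying undirected multigraph $G_{\A}$ of $\A$ embedded in an orientable surface of genus $g = g(\A)$. Let $n = |\A|$, $e$, and $f$ denote the numbers of states, edges, and faces of this embedding. The hypothesis that $\A$ has no simple cycle of length $\leq \rho(m)$ is equivalent to $G_{\A}$ having girth $\geq \rho(m) + 1$, so every face is bounded by a closed walk of length $\geq \rho(m) + 1$, yielding $2e \geq (\rho(m)+1)f$. As $\A$ is a total DFA on $m$ letters, $e = mn$. Combined with $n - e + f = 2 - 2g$ this gives
\begin{equation*}
g(\A) \;\geq\; 1 + \frac{n}{2} \cdot \frac{m(\rho(m)-1) - (\rho(m)+1)}{\rho(m)+1}.
\end{equation*}
The coefficient $c(m)$ of $n/2$ on the right-hand side is strictly positive for each $m \geq 2$; this is essentially the same Euler-type estimate as in Theorem \ref{th:genus_estimate}, but now applied to $\A$ itself rather than to its minimal automaton.

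For the first assertion ($g(\A) \geq 2$), I combine the inequality above with a lower bound on $n$ forced by the girth. The girth hypothesis already forbids self-loops and multi-edges, so $G_{\A}$ is in fact a simple graph on $n$ vertices with $mn$ edges. For $m \geq 4$ (girth $\geq 4$) Mantel's theorem gives $mn = e \leq n^{2}/4$, hence $n \geq 4m \geq 16$; for $m = 3$ and $m = 2$ (girth $\geq 5,\, 6$ respectively) a standard Moore-type bound of the form $e \leq \tfrac{1}{2}\, n^{1 + 2/(\gamma-1)}$ with $\gamma$ the girth yields $n$ of the order $m^{3/2}$ respectively $m^{2}$. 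In each of the three cases the resulting $n$ exceeds the threshold at which the displayed lower bound on $g(\A)$ strictly exceeds $1$, so $g(\A) \geq 2$.

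For the second assertion the same inequality is reread in the other direction: fixing $g \geq 2$, the assumption $g(\A) = g$ forces
\begin{equation*}
|\A| \;\leq\; \frac{2(g-1)(\rho(m)+1)}{m(\rho(m)-1) - (\rho(m)+1)}.
\end{equation*}
Since there are only finitely many DFAs on a fixed $m$-letter alphabet of bounded size (up to relabelling of states), only finitely many $\A$ can satisfy the girth hypothesis together with $g(\A) = g(L(\A)) = g$. The extra condition $g(\A) = g(L(\A))$ plays no role in this cardinality argument; only the genus bound on $\A$ is used.

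The main obstacle is the small-$n$ analysis for the first assertion, particularly for $m = 2$, where the coefficient $c(2) = 1/6$ only guarantees $g(\A) \geq 2$ once $n \geq 6$. One must therefore verify, either via the Moore-type bound cited above or by a direct finite case analysis, that no DFA on a $2$-letter alphabet with girth $\geq 6$ can have fewer than $6$ states; this is essentially an extremal graph-theoretic statement about pentagon-free simple graphs of prescribed edge count, which I expect to be the one step where care is needed.
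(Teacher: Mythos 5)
Your overall strategy coincides with the paper's: both parts rest on the Euler--girth inequality
$g(\A) \geq 1 + \frac{n}{2}\cdot\frac{m(\rho(m)-1)-(\rho(m)+1)}{\rho(m)+1}$, which is exactly the paper's inequality (\ref{eq:key_genus_ineq}) with $j = \rho(m)+1$, and your treatment of the second assertion (invert the inequality to bound $|\A|$ in terms of $g$, then invoke finiteness of DFAs of bounded size; note that the condition $g(\A)=g(L(\A))$ only shrinks the set further) is the paper's argument verbatim and is correct.

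The deviation, and the one genuine gap, is in the first assertion. The paper's proof is a one-liner at this point: the coefficient of $|\A|$ is strictly positive in all three cases ($\frac{m-2}{4}$ for $m\geq 4$, $\frac{2}{5}$ for $m=3$, $\frac{1}{6}$ for $m=2$), so $g(\A) > 1$, and since the genus is an integer this already gives $g(\A)\geq 2$. You instead try to force the real-valued lower bound itself up to $2$, which obliges you to prove that $n$ exceeds a threshold ($n\geq 6$ when $m=2$), and this drags in Mantel's theorem and Moore-type girth bounds. That detour is entirely avoidable, and it is precisely where your proof is unfinished: you explicitly defer the verification for $m=2$ ("the one step where care is needed"). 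The step is in fact true and closable by hand (a simple graph on $n\leq 5$ vertices with $2n$ edges and girth $\geq 6$ is impossible: $n=5$ forces $K_5$, and $n\leq 4$ already has too few possible edges), but the integrality observation makes all of it unnecessary. One further point, shared with the paper's own level of rigor rather than specific to you: the inference ``girth $\geq \rho(m)+1$ implies every face has length $\geq \rho(m)+1$'' requires ruling out short face boundaries that backtrack along an edge and hence contain no simple cycle; the paper handles this case explicitly (the monofacial-edge argument in the proof of Proposition \ref{prop:toric-two-letters}), whereas you assert the implication without comment.
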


As a corollary, we obtain that there is a finite number of regular languages of fixed genus without simple short cycles.

\begin{corollary}
Let $m \geq 2$. For any $L \in \mathscr{C}(m)$, $g(L) \geq 2$.
Furthermore, for each $g \geq 2$, there is a finite number of regular languages $L \in \mathscr{C}(m)$ such that $g(L) = g$.
\end{corollary}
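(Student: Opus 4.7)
The plan is to deduce both parts directly from the genus estimate of Theorem \ref{th:genus_estimate}, setting $c_m := \frac{(\rho(m)-2)m-\rho(m)}{2\rho(m)}$ for brevity. The first step is a case check that $c_m > 0$ for every $m \geq 2$: for $m \geq 4$ one has $c_m = \frac{m-3}{6} > 0$; for $m = 3$ one computes $c_m = \frac{1}{4}$; and for $m = 2$ one computes $c_m = \frac{1}{10}$. Thus the lower bound $g(L) \geq 1 + c_m\,|L|_{\rm set}$ from Theorem \ref{th:genus_estimate} always has strictly positive slope.

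For the inequality $g(L) \geq 2$, I would take any $L \in \mathscr{C}(m)$. By definition the minimal automaton of $L$ has no simple cycle of length $\leq \rho(m)-1$, so Theorem \ref{th:genus_estimate} applies and yields $g(L) \geq 1 + c_m\,|L|_{\rm set}$. Since $|L|_{\rm set} \geq 1$ and $c_m > 0$, the right-hand side is strictly greater than $1$; as $g(L)$ is a nonnegative integer, this forces $g(L) \geq 2$.

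For the finiteness claim, I would fix $g \geq 2$ and consider any $L \in \mathscr{C}(m)$ with $g(L) = g$. The same estimate rearranges to $|L|_{\rm set} \leq (g-1)/c_m$, giving a uniform bound on the size of the minimal automaton. Since there are only finitely many deterministic finite automata on an $m$-letter alphabet with at most a fixed number of states, there are only finitely many regular languages with $|L|_{\rm set} \leq (g-1)/c_m$, and a fortiori only finitely many such languages inside $\mathscr{C}(m)$ with $g(L) = g$.

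The argument is essentially bookkeeping around the genus estimate, so there is no real obstacle once Theorem \ref{th:genus_estimate} is available. The only mildly delicate point is the positivity check $c_m > 0$ in each alphabet regime, which is precisely why the function $\rho(m)$ is piecewise defined as it is. Note also that one can alternatively route the finiteness statement through Theorem \ref{th:finiteness} by passing from languages to the automata that realize their topological size, but the direct estimate-based argument above avoids invoking the more intricate finiteness result at the automaton level.
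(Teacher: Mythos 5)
Your proof is correct and follows essentially the same route as the paper: the paper deduces this corollary from Theorem \ref{th:finiteness}, whose proof is exactly the same bookkeeping around the genus lower bound (in its automaton-level form, inequality (\ref{eq:key_genus_ineq})), whereas you apply the language-level Theorem \ref{th:genus_estimate} directly. The positivity check on the slope $\frac{(\rho(m)-2)m-\rho(m)}{2\rho(m)}$ and the resulting bound on $|L|_{\rm{set}}$ are precisely the intended argument.
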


A few comments may be useful. The hypotheses about the absence of small short cycles and the fixed size of the alphabet is essential. For instance, let $n, p \geq 3$ and consider the language on two letters $$L_{n,p} = \{ w \in \{0,1\}^{*} \ | \ |w|_{0} = 0 \ {\rm{mod}}\ n, \ |w|_{1} = 0 \ {\rm{mod}}\ p \}$$ (where $|w|_{a}$ denotes the number of occurrences of letter $a$ in the word $w$) which can be regarded as the shuffle of $Z_{n}^{1}$ and $Z_{p}^{1}$ \cite[p.65]{Sakarovitch}. 
The minimal automaton for $L_{n,p}$ is obtained as the shuffle product of the minimal automata of $Z_{n}^{1}$ and $Z_{p}^{1}$ respectively. It is not hard to see that this automaton realizes the minimal genus for $L_{n,p}$ -- which is $1$. 

\begin{prop}\label{prop:toric-two-letters}
For $n, p \geq 4$, $L_{n,p}$ is toric.
\end{prop}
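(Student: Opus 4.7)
The plan is to prove $g(L_{n,p})=1$ by matching an upper bound of $1$ with a lower bound of $1$. The upper bound is straightforward: the minimal automaton $\Amin$ of $L_{n,p}$ is the Cayley graph of $\mathbb{Z}/n\mathbb{Z}\times\mathbb{Z}/p\mathbb{Z}$ with generators $(1,0)$ and $(0,1)$, whose underlying undirected multigraph is the toroidal grid $C_n \Box C_p$. Identifying the torus with $\mathbb{R}^2/(n\mathbb{Z}\times p\mathbb{Z})$ and placing the vertices at integer lattice points yields the standard quadrangular embedding, so $g(L_{n,p})\leq g(\Amin)\leq 1$.

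The real content is the lower bound: I would argue that every DFA $\A$ recognizing $L_{n,p}$ is nonplanar. After passing to the accessible part of $\A$, every state has outdegree exactly $2$ (since any word extends to some word in $L_{n,p}$), so the underlying undirected multigraph $G$ of $\A$ satisfies $|E(G)|=2|V(G)|$. I would then introduce the Myhill-Nerode projection $\phi:\A\to\mathbb{Z}/n\mathbb{Z}\times\mathbb{Z}/p\mathbb{Z}$ sending $q=s_0\cdot w$ to $(|w|_0\bmod n,\,|w|_1\bmod p)$, which is well-defined and satisfies $\phi(q\cdot a)=\phi(q)+\delta(a)$ where $\delta(0)=(1,0)$ and $\delta(1)=(0,1)$.

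The heart of the argument is to show that $G$ has girth at least $4$. Any simple closed walk of length $k$ in $G$, with successive edge labels $a_1,\ldots,a_k\in\{0,1\}$ and signs $\varepsilon_i\in\{\pm1\}$ indicating whether the $i$-th edge is traversed in the same direction as its underlying transition in $\A$, produces the relation
\[ \sum_{i=1}^{k}\varepsilon_i\,\delta(a_i)=0 \quad \text{in } \mathbb{Z}/n\mathbb{Z}\times\mathbb{Z}/p\mathbb{Z}. \]
For $k\leq 3$ each coordinate of the sum has absolute value at most $3<\min(n,p)$, so the identity already holds in $\mathbb{Z}^2$; separating the indices by the value of $a_i$ then forces $\sum_{a_i=0}\varepsilon_i=0$ and $\sum_{a_i=1}\varepsilon_i=0$, so both index sets must have even cardinality. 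For $k=1$ or $k=3$ this fails on parity, while for $k=2$ it forces the two edges to share a common label with opposite traversal directions---that is, the same directed transition of $\A$ traversed twice, contradicting simplicity. Hence $G$ has girth at least $4$.

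The combination $|E(G)|=2|V(G)|$ and girth $\geq 4$ contradicts Euler's inequality $|E|\leq 2(|V|-2)$ valid for every connected planar graph, so $G$ is nonplanar and $g(\A)\geq 1$. Minimizing over $\A$ gives $g(L_{n,p})\geq 1$, which together with the upper bound yields $g(L_{n,p})=1$. The main delicacy is in the girth step: the hypothesis $n,p\geq 4$ is used precisely to promote the mod-$n$, mod-$p$ relation to an honest integer relation, after which parity dispatches all orientation patterns uniformly.
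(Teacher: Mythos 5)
Your proof is correct, and its overall shape (torus embedding of $\Amin$ for the upper bound; an Euler-formula obstruction to planarity of \emph{any} recognizing DFA for the lower bound) matches the paper's. The execution of the lower bound, however, is genuinely different in both of its key steps. Where you establish girth $\geq 4$ by hand via the homomorphism $\phi$ onto $\mathbb{Z}/n\mathbb{Z}\times\mathbb{Z}/p\mathbb{Z}$ and a sign/parity analysis of closed walks of length $\leq 3$, the paper invokes its directed-emulator machinery: the projection $\A\to\Amin$ is a directed emulator map, and Lemma~\ref{lem:nosimplecycle} transports ``no simple cycle of length $\leq 3$'' from $\Amin$ up to $\A$. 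Your $\phi$ is exactly that projection in concrete form, and your case analysis (in particular the $k=2$ case, where determinism forces the two edges to coincide) is a hands-on reproof of the lemma in this instance --- a worthwhile check, since the length-$2$ case is precisely where the paper's notion of \emph{simple} cycle is delicate. For the final step, you use the classical girth-$4$ planarity bound $|E|\leq 2(|V|-2)$ against the edge count $|E|=2|V|$ coming from out-degree $2$, whereas the paper argues face by face in a minimal embedding and applies the genus formula $g(\A)=1+\sum_j\frac{j-4}{8}f_j$ from \cite{BD}; these are the same Euler computation packaged differently, but the paper must separately rule out faces bounded by \emph{nonsimple} closed walks of length $\leq 3$ (its ``Claim''), a subtlety you bypass because your girth argument shows the underlying multigraph is in fact a simple graph of girth $\geq 4$, to which the textbook bound applies directly. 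The net effect is that your argument is more self-contained (no appeal to Lemma~\ref{lem:nosimplecycle} or to the genus formula), at the cost of not exhibiting the general mechanism the paper is building towards; you also correctly flag that completeness of the accessible part, hence $|E|=2|V|$, follows from every word being extendable to a word of $L_{n,p}$.
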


We refer to \S \ref{sec:proof-prop-toric-two-letters} for the proof.
It follows from Prop. \ref{prop:toric-two-letters} that there is an infinite family of toric languages on a two-letter alphabet, the simplest example of which has $16$ states. Compare
with \cite{BC} where a two-letter nonplanar language with $35$ states is constructed.


\begin{figure}[!h]
\begin{center}
\includegraphics[scale=0.3]{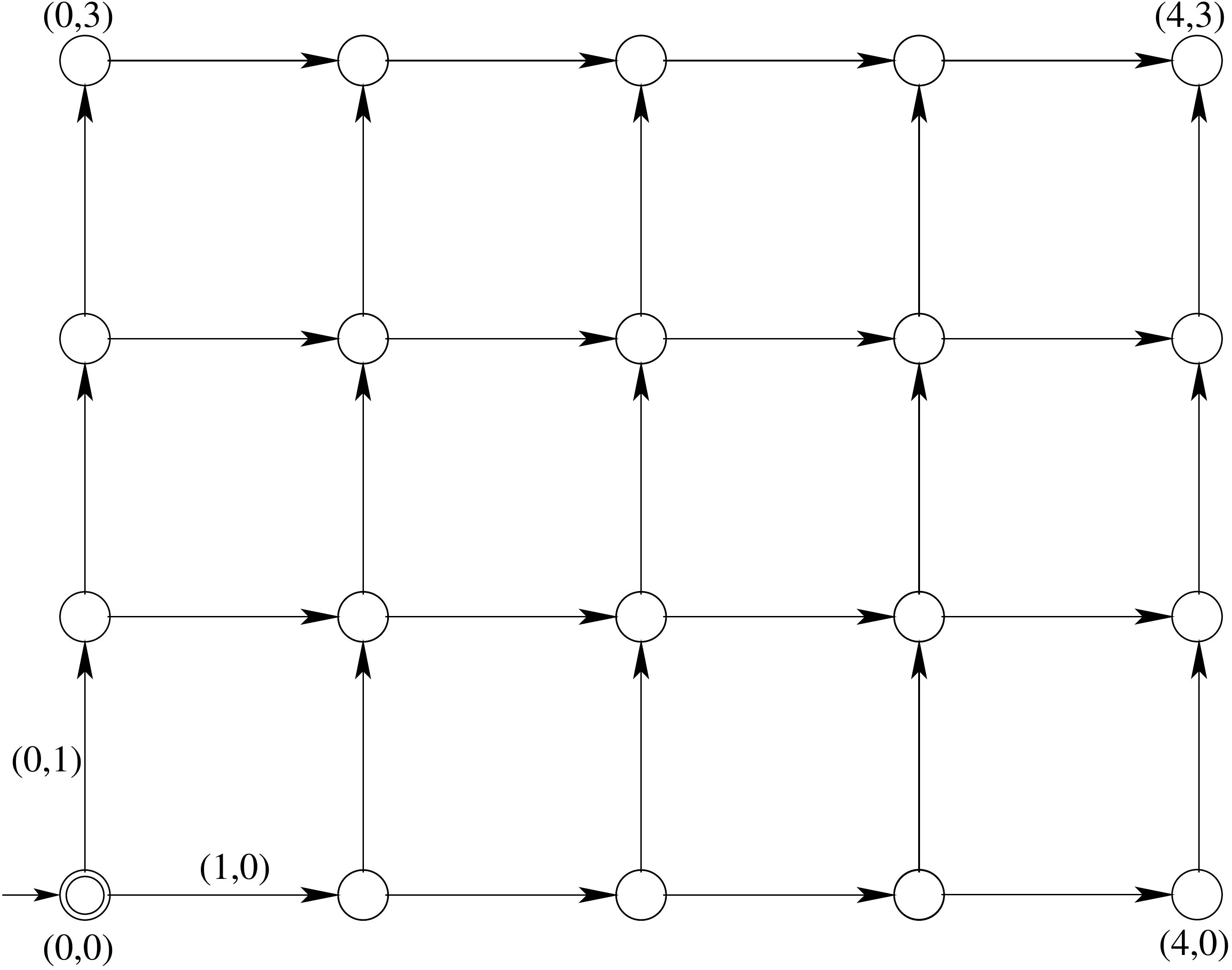}
\end{center}\caption{The minimal automaton for $L_{4,3}$ and its embedding in the torus. The  states $(k,0)$ and $(k,3)$ ($0 \leq k \leq 4$) and the states $(0,l)$ and $(4,l)$ ($0 \leq l \leq 3$) are to be identified as well as the corresponding transitions (the resulting automaton having exactly $12 = 4 \times 3$ states and $24 = 2 \times 12$ transitions) so that the picture represents an embedding of the minimal automaton in the torus. }
\end{figure}

Since $g(L_{n,p}) = 1$ and the lower bound of Theorem \ref{th:genus_estimate} is always greater than $1$, $L_{n,p}$ must have short simple cycles. Indeed, for any $n,p$, the minimal automaton has simple cycles of length $4$, so $L_{n,p} \not\in \mathscr{C}(2)$.

\begin{remark} \label{rem:observations}
Let $2 \leq n_1 \leq n_2 \leq \cdots \leq n_{r}$ be integers. For any finite sequence $(w_{1}, \ldots, w_{s}) \in (\mathbb{Z}/n_1 \mathbb{Z} \times \cdots \mathbb{Z}/n_{r}\mathbb{Z})^{s}$, let
$$Z_{n_1, \ldots, n_r}^{w_1, \ldots, w_s} = \left\{ a_1 \ldots a_k \in \{ w_1, \ldots, w_s \}^{*} \ | \ \sum_{i=1}^{k} a_i = 0 \in \mathbb{Z}/n_1 \mathbb{Z} \times \cdots \mathbb{Z}/n_{r}\mathbb{Z} \right\}.$$
This is a slight generalization of Definition \ref{def:reglang_cyclic_example} where $r = 1$. The language $L_{n,p}$ considered above is also a particular case with $r = 2$:
$L_{n,p} = Z_{n,p}^{(0,1),(1,0)}$. Observe that $Z_{n,p}^{(0,1),(1,0),(1,1)}$ is again a toric language, this time with three letters, and the minimal automaton has simple cycles of length $3$, so it does not belong to $\mathscr{C}(3)$. However, the language $Z_{n,p}^{(0,1),(1,0),(1,1),(-1,1)}$ has four letters and no simple cycles of length $\leq 2$, so by Theorem \ref{th:genus_estimate}, its genus is bounded below by $1 + \frac{1}{6}np$. 

\begin{figure}[!h]
\begin{center}
\includegraphics[scale=0.2]{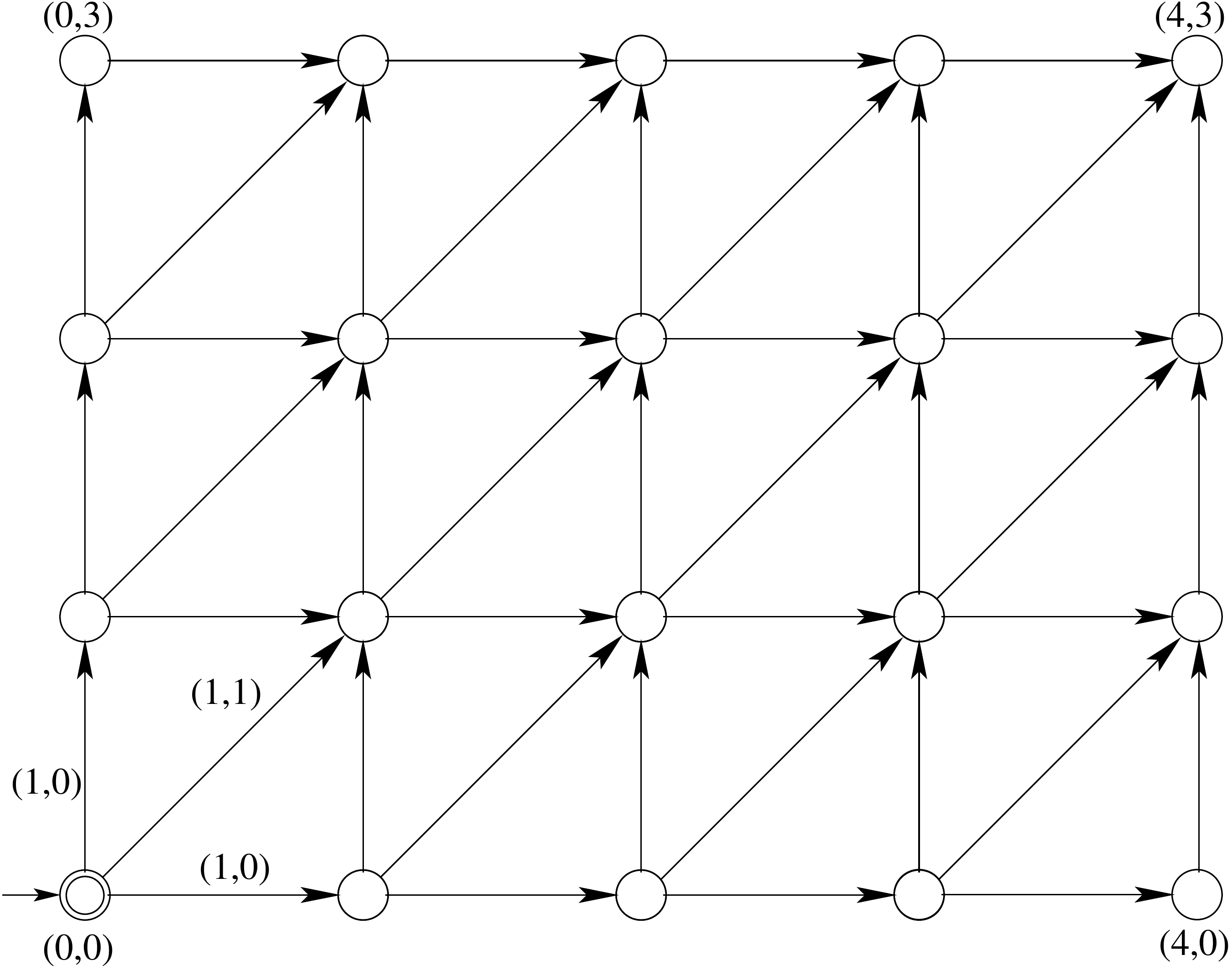} \ \includegraphics[scale=0.2]{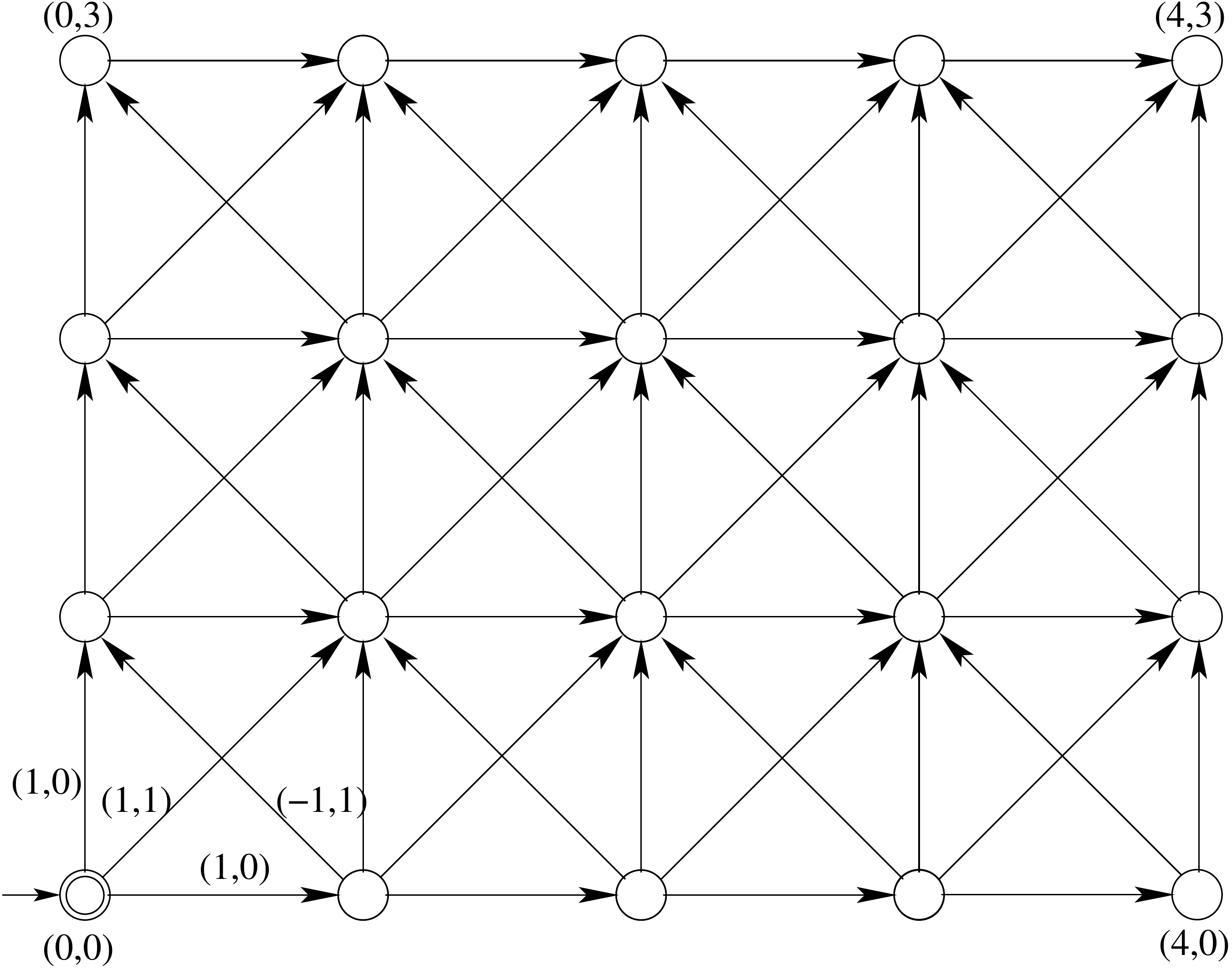}
\end{center}\caption{The minimal automata for $Z_{4,3}^{(1,0),(0,1),(1,1)}$ and $Z_{4,3}^{(1,0),(0,1),(1,1),(-1,1)}$ respectively (with the same identification convention as in the previous figure). The first one embeds into the torus, the second one (nor any deterministic automaton equivalent to it) does not.}
\end{figure}

This provides another example of hierarchy based on the genus. Note that contrary to the genus, the syntactic monoid does not distinguish between the languages $Z_{n,p}^{(0,1),(1,0),(1,1),(-1,1)}$ and $Z_{n,p}^{(0,1),(1,0),(1,1),(-1,1)}$, since
 $${\mathfrak{M}}(Z_{n,p}^{(0,1),(1,0),(1,1)}) = {\mathfrak{M}}(Z_{n,p}^{(0,1),(1,0),(-1,1)}) = \mathbb{Z}/n\mathbb{Z} \times \mathbb{Z}/p\mathbb{Z}.$$


\end{remark}

Another observation is that given a language $L \in \mathscr{C}(m)$, it is easy to build an infinite number of languages of the same genus $g(L)$, but of course the produced languages will have short simple cycles. For instance, if $A$ denotes the alphabet of $L$ and has at least two letters, then for any $k \geq 0$, $g(A^{k}\cdot L) = g(L)$. Indeed, given a genus-minimal automaton for $L$, an automaton of the same genus can be built for the composition $A \cdot L$: it is easily seen to have one simple cycle of length $2$, see Fig. \ref{fig:modified_language_with_same_genus}.

\begin{figure}[!h]
\begin{center}
\begin{picture}(0,0)%
\includegraphics{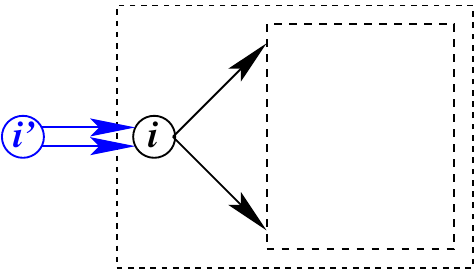}%
\end{picture}%
\setlength{\unitlength}{2368sp}%
\begingroup\makeatletter\ifx\SetFigFont\undefined%
\gdef\SetFigFont#1#2#3#4#5{%
  \reset@font\fontsize{#1}{#2pt}%
  \fontfamily{#3}\fontseries{#4}\fontshape{#5}%
  \selectfont}%
\fi\endgroup%
\begin{picture}(3805,2144)(-32,-833)
\put(976,-736){\makebox(0,0)[lb]{\smash{{\SetFigFont{14}{16.8}{\familydefault}{\mddefault}{\updefault}{\color[rgb]{0,0,0}$ $}%
}}}}
\end{picture}%
\end{center} \caption{A genus-minimal automaton for $L$ (on two letters) with initial state $i$; the corresponding genus-minimal automaton for the language $A \cdot L$ has one extra state $i'$ and two extra transitions from $i'$ to $i$, hence one simple cycle of length $2$.}
\label{fig:modified_language_with_same_genus}
\end{figure}




\section{Directed emulators} \label{sec:diem}


In this section, we give a graph-theoretical approach to
the study of the genus of regular languages. A \emph{directed graph} (or a \emph{digraph}) $G$ consists of a set $V$ of {\emph{vertices}} and a set $E$ of {\emph{edges}} and two maps $s,t: E {\rightrightarrows} V$ (resp. ``source'' and ``target''). A {\emph{morphism}} $G \to H$ {\emph{between directed graphs}} $G = (V_G, E_G,s,t)$ and $H = (V_H, E_H,s',t')$ is a pair of maps $(f_{V}, f_{E})$ where $f_{V}:V_{G} \to V_{H}$ is a map between the set of vertices of $G$ to the set of vertices of $H$ and $f_{E}:E_{G} \to E_{H}$ is a map between the set of edges of $G$ to the set of edges of $H$, that preserves the adjacency relation: $f_{V} \circ s = s' \circ f_{E}$ and $f_{V} \circ t = t' \circ f_{E}$. A digraph $G = (E,V,s,t)$ is \emph{simple} if for any $v, w \in V$, there is at most one edge $e \in E$ such that $s(e)=v$ and $t(e) = w$ and there is no edge $e \in E$ such that $s(e) = t(e)$ (no self-loop). 


In order to investigate the relation between automata and graphs,
we refine a notion that was defined by M. Fellows in the context of undirected graphs.

\begin{definition}
Let $G = (V,E,s,t)$ be a digraph. We say that a digraph $G'=(V',E',s',t')$ is a {\emph{directed emulator}} of $G$ if there is a surjective map $p:V' \to V$ such that for any edge $e \in E$ and any $x \in p^{-1}(s(e))$, there is an edge $e' \in E'$ such that $s(e') = x$ and $t(e') \in p^{-1}(t(e))$. Such a map $p$ is called a {\emph{directed emulator map}} and we say that the digraph  
$G = (V,E)$ is a {\emph{directed amalgamation}} of $G' = (V',E')$ 
if $G'$ is a directed emulator of $G$.
\end{definition}

If we regard the digraph as a simplicial $1$-complex, 
a directed emulator map is a simplicial map mapping the outgoing edges from each vertex $x' \in V'$ surjectively onto the outgoing edges from the image vertex $x \in V$.

If $E'$ is nonempty, then a directed emulator map induces a surjective map on the vertices $q:E' \to E$ that preserves the adjacency relation. Therefore, a directed emulator map is a surjective morphism of digraphs. (if E' nonempty, otherwise add self loops).

\begin{remark} 
\label{rem:edge-contraction-is-case-of-amalgamation}
The endpoints $s'(e')$ and $t'(e')$ of a given edge $e' \in E'$ may happen to be sent by a directed emulator map to a single vertex in $V$ (provided the local condition at the vertex is respected). 
In particular, an edge contraction induces a directed emulator map.
\end{remark}


\begin{remark}
The definition of a direct emulator is a weakening of the definition of a directed graph covering. A covering map
maps the outgoing edges from each vertex $x' \in V'$ bijectively onto the outgoing edges from the image vertex $x \in V$. A graph covering map is a special kind of emulator map. An emulator map is a special kind of
surjective morphism of digraphs.
\end{remark}

\begin{remark} \label{rem:trivial-digraph}
The digraph that consists of one vertex and no edge is the directed amalgamation of any nonempty digraph.
\end{remark}


A {\emph{morphism}} $\A \to \B$ {\emph{between automata}} is a map $f:Q_{\A} \to Q_{\B}$ from the set of states of $\A$ to the set of states of $\B$ with the following properties: 
\begin{enumerate}
\item[(1)] $f$ sends the initial state of $\A$ to the initial state of $\B$;
\item[(2)] $f$ sends the set of final states 
of $\A$ into the set of final states of $\B$;
\item[(3)] The following diagram commutes:
$$ \xymatrix{
Q_{\A} \times {\mathcal{A}} \ar[r] \ar[d]_-{f \times {\rm{id}}} & Q_{\A} \ar[d]_{f} \\
Q_{\B} \times {\mathcal{A}} \ar[r] & Q_{\B}
}$$
where ${\mathcal{A}}$ denotes the alphabet and the horizontal maps are the transition maps of $\A$ and $\B$ respectively.
\end{enumerate}


Deterministic finite accessible and co-accessible automata on a fixed finite alphabet with their morphisms form a category DFA$_0$. (See e.g., \cite[III.4]{Eilenberg}.) Minimal deterministic finite automata are final objects of the category DFA$_{0}$.
We investigate more closely this category and the related category of directed graphs. 


There is a forgetful functor ${\widetilde{\mathscr{G}}}$ from the category DFA$_{0}$ to the category DiGr of finite digraphs: ${\widetilde{\mathscr{G}}}$ forgets the labels on the transitions and the distinguished states.


\begin{lemma} \label{lem:cat}
The functor $\widetilde{\mathscr{G}}$ is full and preserves the genus of objects.
\end{lemma}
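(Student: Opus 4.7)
The plan is to treat the two assertions separately.

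\emph{Preservation of genus.} By the definition in Section~\ref{sec:automata_and_graphs}, the genus $g(\A)$ of an automaton $\A$ is the genus of its underlying undirected multigraph. Since the forgetful functor ${\widetilde{\mathscr{G}}}$ discards only the transition labels and the distinguished initial/final states while keeping the full directed multigraph structure, it preserves the underlying undirected multigraph, and hence the genus, of every object: $g({\widetilde{\mathscr{G}}}(\A)) = g(\A)$.

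\emph{Fullness.} Given any two objects $\A, \B \in \mathrm{DFA}_0$ and any digraph morphism $\phi = (\phi_V, \phi_E) \colon {\widetilde{\mathscr{G}}}(\A) \to {\widetilde{\mathscr{G}}}(\B)$, the plan is to take $f := \phi_V$ as a candidate automaton morphism and to verify the three defining conditions of a morphism in $\mathrm{DFA}_0$: that $f$ preserves the initial state (condition~(1)), maps final states to final states (condition~(2)), and commutes with the transition maps (condition~(3)). Conditions~(1) and~(2) should follow from the accessibility, respectively co-accessibility, of the automata in $\mathrm{DFA}_0$: the initial state is characterized as a state from which all other states can be reached by oriented paths, a property inherited by $\phi_V$ of the initial state under the digraph morphism $\phi$, and dually for final states.

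The main obstacle is condition~(3), which is equivalent to the edge map $\phi_E$ being label-compatible; a priori a general digraph morphism need not respect labels. The plan to overcome this is to exploit determinism: at each state, the outgoing edges are in bijection with a subset of the alphabet, and the same holds at the image state in $\B$. Combined with tracing paths starting at the initial state, which is possible thanks to the accessibility axiom, this should force $\phi_E$ to respect letter labels along every such path, and therefore on all of ${\widetilde{\mathscr{G}}}(\A)$, yielding $f(q\cdot a) = f(q)\cdot a$ as required. In short, the three structural axioms of $\mathrm{DFA}_0$ (determinism, accessibility, co-accessibility) are what rigidify a digraph morphism enough to exhibit it as the image of an automaton morphism under ${\widetilde{\mathscr{G}}}$.
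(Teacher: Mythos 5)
The genus half of your argument is fine and is the only part that is immediate: $\widetilde{\mathscr{G}}$ leaves the underlying undirected multigraph untouched, so it preserves genus by the definition in \S\ref{sec:automata_and_graphs}. Note also that the paper states Lemma~\ref{lem:cat} without proof, so your fullness argument can only be judged on its own merits.

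That fullness half has genuine gaps, and in fact literal fullness (surjectivity of $\mathrm{Hom}_{\mathrm{DFA}_0}(\A,\B) \to \mathrm{Hom}_{\mathrm{DiGr}}(\widetilde{\mathscr{G}}\A,\widetilde{\mathscr{G}}\B)$) is false, so no repair of your plan as written can succeed. First, accessibility says every state is reachable \emph{from} the initial state; it does not characterize the initial state as the unique such state. In the minimal automaton of $(11)^*$ (a directed $2$-cycle with one accepting state) both states reach everything, and the swap is a digraph automorphism sending the initial state to the other state; since $\widetilde{\mathscr{G}}$ is the identity on state sets, this swap is not $\widetilde{\mathscr{G}}(f)$ for any automaton morphism $f$. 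Second, co-accessibility does not characterize final states at all, so condition~(2) cannot be extracted from it. Third, for condition~(3), a digraph morphism may permute parallel edges: on the one-state automaton over $\{a,b\}$ with two self-loops, the edge transposition is a digraph endomorphism induced by no (label-preserving) automaton morphism, so determinism does not rigidify $\phi_E$ in the way you hope. What the paper actually needs downstream (Prop.~\ref{prop:functor-automata-emulator} and the lifting step in the proof of Theorem~\ref{th:genus-and-emulator}) is a weaker statement: a \emph{directed emulator} map $G' \to \widetilde{\mathscr{G}}(\Amin)$ can be promoted to an automaton morphism onto $\Amin$ after \emph{choosing} labels on the edges of $G'$ compatibly with the emulator condition and pulling back the initial/final data. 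That restricted lifting property is what you should prove; it is a statement about emulator maps over a fixed target, not about arbitrary digraph morphisms between arbitrary pairs of automata.
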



In particular, a regular language $L$ gives rise, via its minimal automaton $A_{\rm{min}}(L)$, to a digraph denoted $\widetilde{G}(L)$.


%
%

\begin{definition}
The category ${\rm{DFA}}$ is defined as follows.

- An object in DFA is a morphism in DFA$_0$, i.e. a morphism
$\A' \to \A$ between deterministic finite (accessible, co-accessible) automata.

- A morphism in DFA is a commutative diagram of automata morphisms:
\begin{equation} \xymatrix{
\A' \ar[r] \ar[d] & \B' \ar[d] \\
\A \ar[r] & \B
}\label{eq:morphism-in-dfa}
\end{equation}
\end{definition}

\begin{definition}
The category DiEm of {\emph{directed emulators}} is defined as follows. An object in DiEm is
a  directed emulator map. A morphism in DiEm is a commutative
diagram
\begin{equation} 
\xymatrix{
G' \ar[r] \ar[d]_{\pi_{G}} & H' \ar[d]^{\pi_{H}} \\
G \ar[r] & H
} \label{eq:morphism-in-diem}
\end{equation}
where the vertical maps are directed emulators and the horizontal maps are digraph morphisms.
\end{definition}

\begin{prop} \label{prop:functor-automata-emulator}
The functor $\widetilde{\mathscr{G}}:{\rm{DFA}}_{0} \to {\rm{DiGr}}$ induces a full functor ${\rm{DFA}} \to {\rm{DiEm}}$.
\end{prop}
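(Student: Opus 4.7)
The plan is to verify the three pieces required: (i) $\widetilde{\mathscr{G}}$ sends objects of $\mathrm{DFA}$ to objects of $\mathrm{DiEm}$, (ii) it sends morphisms to morphisms compatibly with composition, and (iii) the induced functor is full. For (i), consider an object of $\mathrm{DFA}$, namely a morphism $f:\A' \to \A$ in $\mathrm{DFA}_0$; I must check that $\widetilde{\mathscr{G}}(f):\widetilde{\mathscr{G}}(\A')\to \widetilde{\mathscr{G}}(\A)$ is a directed emulator map. Surjectivity on vertices follows from accessibility of $\A$: the image of the state map $f$ contains $i_\A = f(i_{\A'})$ and is closed under transitions by the commutativity axiom (3) of $\mathrm{DFA}_0$-morphisms, so by accessibility it equals all of $Q_\A$. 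The local emulator condition is a direct translation of the same axiom: given an edge of $\widetilde{\mathscr{G}}(\A)$ from $q$ to $q'$ arising from a transition $q\trans{a} q'$ and any $x \in f^{-1}(q)$, the transition $\delta_{\A'}(x,a)$ provides an edge of $\widetilde{\mathscr{G}}(\A')$ from $x$ to a vertex in $f^{-1}(q')$.

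For (ii), a morphism in $\mathrm{DFA}$ is a commutative square \eqref{eq:morphism-in-dfa} of $\mathrm{DFA}_0$-morphisms; applying $\widetilde{\mathscr{G}}$ termwise yields a commutative square of digraph morphisms whose vertical arrows are directed emulators by (i), which is exactly a morphism in $\mathrm{DiEm}$ in the sense of \eqref{eq:morphism-in-diem}, and functoriality is inherited from $\widetilde{\mathscr{G}}$. For (iii), fix objects $f:\A'\to \A$ and $g:\B'\to \B$ of $\mathrm{DFA}$ together with a $\mathrm{DiEm}$-morphism between their images, given by horizontal digraph morphisms $\phi:\widetilde{\mathscr{G}}(\A)\to \widetilde{\mathscr{G}}(\B)$ and $\phi':\widetilde{\mathscr{G}}(\A')\to \widetilde{\mathscr{G}}(\B')$ satisfying $\widetilde{\mathscr{G}}(g)\circ \phi' = \phi \circ \widetilde{\mathscr{G}}(f)$. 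By Lemma \ref{lem:cat}, $\widetilde{\mathscr{G}}$ is full, so $\phi$ and $\phi'$ lift to $\mathrm{DFA}_0$-morphisms $\Phi:\A\to \B$ and $\Phi':\A'\to \B'$. Because a $\mathrm{DFA}_0$-morphism is determined by its underlying state map, the commutativity of state maps coming from the $\mathrm{DiEm}$-square upgrades to the identity $g\circ \Phi' = \Phi\circ f$ in $\mathrm{DFA}_0$, yielding the desired $\mathrm{DFA}$-morphism lifting the given $\mathrm{DiEm}$-morphism.

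The principal subtlety lies in step (i): one must use the full strength of the commutation axiom (3) of $\mathrm{DFA}_0$-morphisms together with accessibility to deliver surjectivity and, more importantly, to ensure that a transition $\delta_\A(q,a)=q'$ forces the existence of $\delta_{\A'}(x,a)$ at each preimage $x$ of $q$ with $f(\delta_{\A'}(x,a)) = q'$ --- a matter of how the morphism axioms interact with the transition functions of $\A$ and $\A'$. Once this is settled, the remaining steps are essentially bookkeeping, with fullness reduced to Lemma \ref{lem:cat} and the elementary observation that $\mathrm{DFA}_0$-morphisms are determined by their state maps.
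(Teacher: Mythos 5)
Your proof is correct and follows essentially the same route as the paper: the core step in both is to translate axiom (3) of an automaton morphism $f:\A'\to\A$ into the local lifting condition defining a directed emulator map. You are in fact more thorough than the paper, which omits the surjectivity argument (your use of accessibility) and treats fullness and functoriality as immediate, whereas you reduce them explicitly to Lemma \ref{lem:cat} and the fact that $\mathrm{DFA}_0$-morphisms are determined by their state maps.
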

We still denote $\widetilde{\mathscr{G}}$ the functor ${\rm{DFA}} \to {\rm{DiEm}}$.


\begin{proof}
We really only need to verify that the functor sends a morphism $p: \A' \to \A$ between automata to a directed emulator graph. 
Consider two distinct states $x, y \in \A$ such that $\delta_{\A}(x,l) = y$ for some letter $a$. Let a state $x' \in p^{-1}(x)$. By property (3) of the definition, $\delta_{\A'}(x',a)$ lies in $p_{\A}^{-1}(y)$. Therefore there is a transition from $x'$ to some $y' \in p_{A}^{-1}(y)$ in $\A$. This implies that the induced map
${\mathscr{G}}(\A') \to {\mathscr{G}}(\A)$ is
a directed emulator map.
\end{proof}

\begin{definition}
Let $\A \in {\rm{DFA}}_0$. Let ${\rm{DFA}}(\A)$ be the set of all automata $\B \in {\rm{DFA}}$ such that there is at least one morphism $\B \to \A$.
\end{definition}

\begin{remark}
If $\A$ is a minimal automaton, then there is at most one morphism $\B \to \A$. Furthermore, if such a morphism exists, then it is the canonical projection induced by the map sending a state to its equivalence class.
\end{remark}

\begin{lemma}
Given $\A_1, \A_2 \in {\rm{DFA}}(\A_{\rm{min}})$, there exists $\A_{12} \in {\rm{DFA}}$ such that the following diagram is commutative
$$
\xymatrix{
& \A_{12} \ar@{.>}[ld]_{p_{121}} \ar@{.>}[dd]^{p_{12}} \ar@{.>}[rd]^{p_{122}} & \\
\A_{1} \ar[rd]_{p_1} & & \A_2 \ar[ld]^{p_{2}} \\
& \Amin &
}
$$
\end{lemma}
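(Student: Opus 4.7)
The plan is to construct $\A_{12}$ as the fiber product $\A_{1} \times_{\Amin} \A_{2}$ in the category $\mathrm{DFA}_0$, then restrict to its accessible part. Concretely, set
$$Q_{12} = \{(x_1,x_2) \in Q_{\A_1} \times Q_{\A_2} \ | \ p_1(x_1) = p_2(x_2)\},$$
take $(i_{\A_1}, i_{\A_2})$ as the initial state (which lies in $Q_{12}$ since morphisms preserve initial states), declare $(x_1,x_2)$ to be final iff $x_1$ is final in $\A_1$ \emph{and} $x_2$ is final in $\A_2$, and define the transition by $\delta_{12}((x_1,x_2),a) = (\delta_{\A_1}(x_1,a), \delta_{\A_2}(x_2,a))$. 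The commutativity of the squares defining $p_1$ and $p_2$ as automata morphisms shows that the transition stays in $Q_{12}$, so $\A_{12}$ is a well-defined deterministic automaton on the fixed alphabet. The canonical projections $p_{121}(x_1,x_2) = x_1$, $p_{122}(x_1,x_2) = x_2$, together with $p_{12} = p_1 \circ p_{121} = p_2 \circ p_{122}$, are automata morphisms by construction, and the required diagram commutes tautologically.

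Next I would check that $\A_{12}$ belongs to $\mathrm{DFA}_0$, i.e. is accessible and co-accessible. Accessibility is obtained by restricting $Q_{12}$ to the subset of states reachable from $(i_{\A_1}, i_{\A_2})$; the restrictions of the three morphisms remain well defined because a morphism always sends accessible states to accessible states. For co-accessibility, let $(x_1,x_2)$ be an accessible state and let $s = p_1(x_1) = p_2(x_2) \in \Amin$. Since $\Amin$ is co-accessible, there is a word $v$ such that $\delta_{\Amin}(s,v)$ is final in $\Amin$. The key point is that the morphism $p_i: \A_i \to \Amin$ from an accessible co-accessible automaton to its minimal quotient is saturated on final states: $p_i^{-1}(\text{finals of }\Amin)$ is exactly the set of finals of $\A_i$, because Nerode-equivalent states accept the same continuations (in particular $\varepsilon$). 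Applied to $y_i := \delta_{\A_i}(x_i,v)$, this forces $y_1$ and $y_2$ to be final, so $(y_1,y_2)$ is a final state of $\A_{12}$ reachable from $(x_1,x_2)$ via $v$.

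The verification that $p_{121}, p_{122}, p_{12}$ satisfy the three morphism conditions of $\mathrm{DFA}_0$ is then routine: condition (1) holds by the choice of $(i_{\A_1},i_{\A_2})$ as initial; condition (2) holds by the choice of final states (a final pair projects to finals in each factor, and then to a final in $\Amin$ by condition (2) for $p_1$ or $p_2$); condition (3), i.e. equivariance with respect to transitions, is built into the componentwise definition of $\delta_{12}$.

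The main obstacle is the co-accessibility step, since the naive fiber product is in general not co-accessible: one has to know that words taking $s$ to a final state in $\Amin$ simultaneously take $x_1$ and $x_2$ to finals in $\A_1$ and $\A_2$. This relies specifically on the fact that $\Amin$ is the \emph{minimal} automaton, via the saturation property of Nerode quotient morphisms recalled in the remark preceding the lemma; without this, only the weaker statement ``$p_i(y_i)$ final'' would be available, and option (a) for the final states of $\A_{12}$ could fail. All the other ingredients are standard categorical bookkeeping around the pullback diagram.
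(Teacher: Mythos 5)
Your construction is exactly the paper's: the paper's entire proof is ``Take $\A_{12}$ to be the fibered product of $\A_1$ and $\A_2$ over $p_1\times p_2$.'' Your additional verifications (restricting to the accessible part, and using the fact that a morphism onto $\Amin$ is the Nerode projection and hence saturated on final states to get co-accessibility) correctly fill in the details the paper leaves implicit.
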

\begin{proof}
Take $\A_{12}$ to be the fibered product of $\A_1$ and $\A_2$ over $p_{1} \times p_{2}$.
\end{proof}

\begin{definition}
Let $G$ be a digraph.
We denote DiEm$(G)$ the set of directed emulators of $G$.
\end{definition}

\begin{lemma} \label{lem:common-directed-emulator}
Given $G_1, G_2 \in {\rm{DiEm}}(G)$, there exists $G_{12} \in  {\rm{DiEm}}(G)$ making the following diagram
commute.
$$
\xymatrix{
& G_{12} \ar@{.>}[ld]_{} \ar@{.>}[dd]^{} \ar@{.>}[rd]^{} & \\
G_{1} \ar[rd]_{} & & G_2 \ar[ld]^{} \\
& G &
}
$$
where each map is a directed emulator map.
\end{lemma}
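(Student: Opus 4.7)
The plan is to imitate the construction of the previous lemma by forming the fibered product of $G_1$ and $G_2$ over $G$ in the category of digraphs. Explicitly, I would take
$$V_{12} = \{(v_1, v_2) \in V_1 \times V_2 \mid p_1(v_1) = p_2(v_2)\}$$
and, for edges, all pairs $(e_1, e_2) \in E_1 \times E_2$ whose images in $E$ under the emulator maps coincide, with source and target defined componentwise. The coordinate projections $\pi_i: G_{12} \to G_i$ ($i = 1, 2$) are the candidate emulator maps, and the commutativity of the triangular diagram with $G$ is built into the definition since $p_1 \circ \pi_1$ and $p_2 \circ \pi_2$ agree on every edge and every vertex by construction.

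Once the candidate $G_{12}$ is set up, there are essentially two items to verify for each $\pi_i$. Vertex surjectivity follows directly from the surjectivity of the other emulator map: given $v_1 \in V_1$, pick any $v_2 \in p_2^{-1}(p_1(v_1))$ to obtain a preimage $(v_1, v_2) \in V_{12}$. For the outgoing-edge lifting condition, take any $(v_1, v_2) \in V_{12}$ and any outgoing edge $e_1$ at $v_1$ in $G_1$; letting $e = p_1(e_1) \in E$, we have $s(e) = p_1(v_1) = p_2(v_2)$, so the emulator property of $p_2$ at $v_2$ supplies an outgoing edge $e_2$ at $v_2$ in $G_2$ with $p_2(e_2) = e = p_1(e_1)$. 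By the definition of $E_{12}$, the pair $(e_1, e_2)$ is then an edge of $G_{12}$ starting at $(v_1, v_2)$ and projecting under $\pi_1$ to $e_1$; the case of $\pi_2$ is symmetric. The composite $p_1 \circ \pi_1 = p_2 \circ \pi_2$ is a composition of directed emulator maps, hence itself one, so $G_{12} \in {\rm{DiEm}}(G)$.

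I do not expect any serious obstacle, since the argument parallels the automata case. The one subtlety worth stressing is that the edges of $G_{12}$ must be defined by matching images in $E$ rather than merely by matching endpoints: this is precisely what allows the edge produced by the emulator property of $p_2$ to be paired with the given edge of $G_1$ into a legitimate edge of the fibered product, and it is also what makes the induced edge maps of $\pi_1, \pi_2$ well-defined projections. Provided this convention is in place, the verification of the three emulator conditions (surjectivity on vertices, outgoing-edge lifting for $\pi_1$, and the symmetric one for $\pi_2$) is a direct and formal check along the lines sketched above.
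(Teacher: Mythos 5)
Your construction is correct in spirit, but it takes a more direct route than the paper: you form the fibered product \emph{in the category of digraphs}, whereas the paper forms the fibered product of the two automata over $\Amin$ (the previous lemma) and then applies the forgetful functor $\widetilde{\mathscr{G}}$. Both approaches amount to a pullback, so the key idea is the same; yours has the advantage of working intrinsically with digraphs without passing through automata, and it makes the verification of the emulator property visible at the level where the statement actually lives. One point worth flagging, though: when you write ``letting $e = p_1(e_1) \in E$'', you are using an induced edge map $E_1 \to E$. The paper's definition of a directed emulator specifies only a surjective \emph{vertex} map with the outgoing-edge-lifting property, and in general a directed emulator $G_1 \to G$ can contain edges that project to no edge of $G$ (e.g.\ an edge of $G_1$ whose two endpoints both sit over the same vertex of $G$ when $G$ has no self-loop there); for such an edge your $E_{12}$ supplies no lift, and the emulator condition for $\pi_1$ would fail. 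The paper implicitly assumes edge maps exist (see the remark ``a directed emulator map is a surjective morphism of digraphs''), and its own proof is immune to this issue because transitions of an automaton morphism always project; your proof inherits the same tacit convention. To make your argument watertight without that convention, you would need to enlarge $E_{12}$: for each non-projecting edge $e_1$ of $G_1$ with source $v_1$ and each $v_2$ with $p_2(v_2)=p_1(v_1)$, add an edge from $(v_1,v_2)$ to $(t(e_1), v_2')$ for some chosen $v_2'$ in $p_2^{-1}(p_1(t(e_1)))$ (which is nonempty by surjectivity of $p_2$), and symmetrically for $G_2$; the commuting square is unaffected since it only concerns vertices.
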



\begin{proof}
Apply the functor $\widetilde{\mathscr{G}}$ to the diagram of the previous lemma.
\end{proof}

\begin{lemma} \label{lem:absolute-common-directed-emulator}
Any two digraphs have a common directed emulator.
\end{lemma}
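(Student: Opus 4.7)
The plan is to reduce the statement to the preceding Lemma \ref{lem:common-directed-emulator} by exhibiting a common directed amalgamation for any two digraphs $G_1$ and $G_2$. The key observation is Remark \ref{rem:trivial-digraph}: the trivial digraph $T$ consisting of one vertex and no edges is a directed amalgamation of any nonempty digraph. Indeed, if $G = (V,E,s,t)$ is nonempty, the unique map $p : V \to \{v\}$ is surjective, and the emulator condition is vacuous because $T$ has no edges. Hence, assuming $G_1$ and $G_2$ are both nonempty, both belong to $\mathrm{DiEm}(T)$.

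First, I would invoke Lemma \ref{lem:common-directed-emulator} with this choice of common amalgamation. It produces a digraph $G_{12}$ together with directed emulator maps $G_{12} \to G_1$ and $G_{12} \to G_2$, which is exactly what is required. The empty-digraph edge cases can be disposed of separately: the empty digraph is trivially a directed emulator of itself and, as a matter of convention, plays no role in the main applications of the lemma, where automata always have a nonempty set of states.

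As a sanity check, one can also spell out a direct construction and verify it by hand, which I would include as a remark. Namely, let $G_i^{\bullet}$ denote the digraph obtained from $G_i$ by adding a self-loop $\ell_x$ at every vertex $x$, and form the cartesian product
\begin{equation*}
G_{12} \;=\; G_1^{\bullet} \times G_2^{\bullet},
\end{equation*}
with vertex set $V_1 \times V_2$ and edge set $E_1^{\bullet} \times E_2^{\bullet}$, endowed with the componentwise source and target maps. The first projection $\pi_1: V_1\times V_2 \to V_1$ is surjective (since $V_2 \neq \emptyset$) and the directed emulator condition is met: given $e_1 \in E_1$ and any $(x_1,x_2)$ in $\pi_1^{-1}(s_1(e_1))$, the edge $(e_1, \ell_{x_2})$ has source $(x_1,x_2)$ and target in $\{t_1(e_1)\} \times V_2 = \pi_1^{-1}(t_1(e_1))$. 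The same argument works for $\pi_2$.

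The main potential obstacle is merely bookkeeping around degenerate cases (empty digraphs, or the subtlety that in Remark \ref{rem:trivial-digraph} one must check the emulator condition vacuously); no substantive combinatorial difficulty arises, since the existence of a common amalgamation is essentially free and the real work has already been done in Lemma \ref{lem:common-directed-emulator}.
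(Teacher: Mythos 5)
Your main argument is exactly the paper's: both digraphs are directed emulators of the one-vertex, no-edge digraph (Remark~\ref{rem:trivial-digraph}), and Lemma~\ref{lem:common-directed-emulator} then supplies the common emulator. The proof is correct, and the supplementary product construction $G_1^{\bullet}\times G_2^{\bullet}$ is a valid (if unnecessary) explicit witness.
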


\begin{proof}
They both are directed emulators of the digraph that consists of one vertex with no edge (Remark~\ref{rem:trivial-digraph}).
\end{proof}

\begin{definition}
The category of {\emph{simple directed graphs}} is denoted ${\rm{SDiGr}}$. 
\end{definition}

\begin{lemma} \label{lem:full_from_simple_to_directed}
The category ${\rm{SDiGr}}$ of simple directed graphs is a full subcategory ${\rm{DiGr}}$ of directed graphs. The forgetful functor $R:{\rm{DiGr}} \to {\rm{SDiGr}}$ that consists in forgetting self-loops and merging  multiple oriented edges into one is a full retraction functor of the inclusion functor ${\rm{SDiGr}} \to {\rm{DiGr}}$. Both preserve genus.
\end{lemma}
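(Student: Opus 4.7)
The lemma splits into three claims that I would address in turn. First, that ${\rm{SDiGr}}$ is a full subcategory of ${\rm{DiGr}}$ is essentially definitional: a simple digraph is just a digraph without self-loops or parallel edges, and the paper's notion of digraph morphism imposes no additional constraint when the source and target happen to be simple, so the inclusion $i:{\rm{SDiGr}} \hookrightarrow {\rm{DiGr}}$ is automatically full.

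Second, to construct the retraction $R$ explicitly, I would act on objects by removing every self-loop of $G$ and collapsing each family of parallel edges sharing a common source and target into a single representative edge. On a morphism $f = (f_V, f_E): G \to H$, define $R(f)$ to have vertex component $f_V$ and, on each edge $v \to w$ of $R(G)$ (necessarily $v \neq w$), send it to the unique edge $f_V(v) \to f_V(w)$ of $R(H)$ supplied by the image in $H$ of any preimage edge in $G$. Functoriality is routine, and the retraction identity $R \circ i = \mathrm{id}_{{\rm{SDiGr}}}$ holds because a simple digraph contains no self-loops or parallel edges to be affected by $R$.

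Third, both functors preserve genus. For $i$ this is immediate since it does not alter the underlying undirected multigraph. For $R$, I would establish the double inequality $g(R(G)) \le g(G) \le g(R(G))$. The first inequality is immediate: $R(G)$ is obtained from $G$ by edge deletions, and any embedding of $G$ restricts to one of $R(G)$ on the same surface. For the converse, starting from a $2$-cell embedding of $R(G)$ in a surface $\Sigma$ of genus $g(R(G))$, I would extend it to an embedding of $G$ by placing each self-loop of $G$ inside a small disk neighborhood of its base vertex sitting in some incident face, and by placing each additional parallel edge inside a tubular neighborhood of the surviving representative edge in $R(G)$. Neither operation introduces a handle, so $g(G) \le g(R(G))$.

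The main (if mild) subtlety lies in this last topological step: one must verify that each vertex lies on the boundary of an incident face with room for arbitrarily many nested self-loops, which holds for any $2$-cell embedding since at least one face is incident to every vertex, and that a tubular neighborhood of a single arc accommodates arbitrarily many disjoint parallel arcs, which is immediate from elementary surface topology.
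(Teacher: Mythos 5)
The paper states this lemma without proof (only the figure illustrating $R$ accompanies it), so your proposal is supplying an argument the authors considered routine. The substantive part --- genus preservation --- you handle correctly and by the standard argument: $g(R(G))\le g(G)$ because $R(G)$ is a subgraph of $G$, and $g(G)\le g(R(G))$ because self-loops can be reinstated in a sector of a small disk around their base vertex and parallel edges in a band neighborhood of the surviving representative, neither operation requiring a handle. This is exactly what the lemma is used for later (the equivalence of items $(2)$ and $(3)$ in Theorem \ref{th:genus-and-emulator}), and that part of your proof is complete, modulo the cosmetic point that you do not really need $2$-cellularity: any embedding admits the two local modifications.

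There is, however, a genuine gap hiding behind ``functoriality is routine.'' With the paper's definition of digraph morphism ($f_E$ sends edges to edges with $f_V\circ s=s'\circ f_E$ and $f_V\circ t=t'\circ f_E$), a morphism $f:G\to H$ may send an edge $v\to w$ with $v\neq w$ to a self-loop of $H$ (namely when $f_V(v)=f_V(w)$). That self-loop is deleted in $R(H)$, so your prescription ``send the edge of $R(G)$ to the unique edge $f_V(v)\to f_V(w)$ of $R(H)$'' has no target: $R(f)$ is simply undefined on such morphisms. A concrete instance: $G$ a single edge between two distinct vertices, $H$ a single vertex with a self-loop; the collapsing morphism $G\to H$ exists in ${\rm{DiGr}}$ but there is no morphism at all from $R(G)=G$ to $R(H)=\{\ast\}$. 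A similar issue affects fullness of $R$ (a self-loop of $G$ need not lift along a morphism $R(G)\to R(H)$ when $H$ has no self-loop at the image vertex). To be fair, this is an imprecision in the lemma as stated rather than something you introduced --- one must either restrict to morphisms that do not identify the endpoints of any edge, or enlarge the morphisms of ${\rm{SDiGr}}$ to allow an edge to be collapsed onto a vertex --- but a proof that asserts $R$ is a functor should confront this case rather than declare it routine. Since the paper only ever applies the lemma through its genus-preservation clause, the gap is inessential to the rest of the argument, but it should be acknowledged.
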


\begin{example} Fig. \ref{fig:from_multiple_to_simple} depicts the action of the forgetful map on a directed graph.
\begin{figure}[!h]
\begin{center}
\includegraphics[scale=0.5]{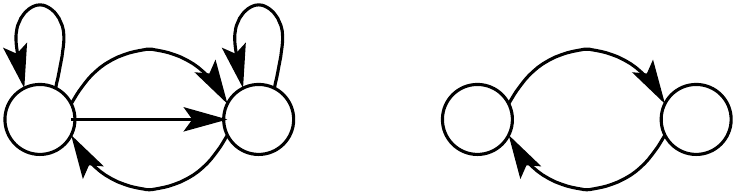}
\end{center}
\caption{Action of the retraction map on a directed graph.} \label{fig:from_multiple_to_simple}
\end{figure}
\end{example}

\begin{definition}
The category ${\rm{SDiEM}}$ of {\emph{simple directed emulators}} is the category whose objects are directed emulators between simple directed graphs and whose morphisms are commutative squares where vertical maps are directed emulators and horizontal maps are morphisms of (simple) directed graphs.
\end{definition}

\begin{lemma} \label{lem:full_from_automata_to_directed_emulator}
The category ${\rm{SDiEM}}$ of {\emph{simple directed emulators}} is a full subcategory of the category ${\rm{DiEm}}$ of directed emulators.
\end{lemma}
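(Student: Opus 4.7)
The plan is to reduce the claim directly to Lemma \ref{lem:full_from_simple_to_directed}, which already asserts that ${\rm{SDiGr}}$ is a full subcategory of ${\rm{DiGr}}$. The bulk of the work is just unwinding the definitions of the two categories involved.

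First, I would verify that ${\rm{SDiEM}}$ is indeed a subcategory of ${\rm{DiEm}}$. An object of ${\rm{SDiEM}}$ is a directed emulator map $p:G' \to G$ with $G', G \in {\rm{SDiGr}}$; since simple digraphs are in particular digraphs, this is an object of ${\rm{DiEm}}$. A morphism of ${\rm{SDiEM}}$ is a commutative square of the form \eqref{eq:morphism-in-diem} whose horizontal arrows are morphisms in ${\rm{SDiGr}}$; since ${\rm{SDiGr}} \hookrightarrow {\rm{DiGr}}$, those horizontal arrows are in particular digraph morphisms, so the square is a morphism in ${\rm{DiEm}}$. Composition and identities obviously match, so the inclusion ${\rm{SDiEM}} \hookrightarrow {\rm{DiEm}}$ is a functor.

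For fullness, I would take two objects $(p:G' \to G)$ and $(q:H' \to H)$ of ${\rm{SDiEM}}$ and an arbitrary morphism in ${\rm{DiEm}}$ between them, i.e.\ a commutative diagram
$$\xymatrix{ G' \ar[r]^{f'} \ar[d]_{p} & H' \ar[d]^{q} \\ G \ar[r]^{f} & H }$$
in which $f$ and $f'$ are a priori only digraph morphisms. Since $G, H, G', H'$ all belong to ${\rm{SDiGr}}$, Lemma \ref{lem:full_from_simple_to_directed} applies: any digraph morphism between simple digraphs is automatically a morphism in ${\rm{SDiGr}}$. Hence both $f$ and $f'$ are morphisms of simple directed graphs, so the square is already a morphism in ${\rm{SDiEM}}$. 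This gives fullness.

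There is essentially no obstacle here; the statement is a formal consequence of the fullness of ${\rm{SDiGr}}$ in ${\rm{DiGr}}$. The only thing worth being mildly careful about is the vertical arrows: one must note that the notion of ``directed emulator map'' is defined intrinsically in terms of the source and target maps of the underlying digraphs, and is not altered by viewing a simple digraph as a general digraph. Thus $p$ and $q$ qualify as directed emulator maps in both categories, and the diagram is literally the same object in ${\rm{SDiEM}}$ and ${\rm{DiEm}}$.
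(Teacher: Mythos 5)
Your proof is correct and is the natural argument; the paper states this lemma without proof, presumably viewing it as a direct consequence of Lemma~\ref{lem:full_from_simple_to_directed}, which is exactly the reduction you give: the horizontal maps in a ${\rm{DiEm}}$-morphism are digraph morphisms, which between simple digraphs already lie in ${\rm{SDiGr}}$, and the directed-emulator condition on the vertical maps is defined the same way in both settings. As an aside worth noting, when the authors later cite this lemma in the proof of Theorem~\ref{th:genus-and-emulator} they use it to lift a directed emulator map over ${\mathscr{G}}(\Amin)$ to an automaton morphism onto $\Amin$, which is really the fullness of the functor ${\mathscr{G}}:{\rm{DFA}}\to{\rm{SDiEM}}$ recorded in the subsequent corollary, not the fullness of the inclusion ${\rm{SDiEM}}\hookrightarrow{\rm{DiEm}}$ that you prove here; so your argument correctly establishes the lemma as stated even though that statement is not quite the assertion the downstream proof actually invokes.
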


\begin{corollary}
The assignment ${\mathscr{G}}: R \circ {\widetilde{\mathscr{G}}}: {\rm{DFA}} \to {\rm{SDiGr}}$ induced by the map sending a finite deterministic automaton its underlying simple directed graph is a full and genus preserving, functor. 
\end{corollary}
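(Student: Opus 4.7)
The plan is to assemble this statement by composition from the preceding lemmas. The first thing to clarify is how to interpret $R \circ \widetilde{\mathscr{G}}$ on the arrow category $\mathrm{DFA}$: the functor $\widetilde{\mathscr{G}}$ on $\mathrm{DFA}$ lands in $\mathrm{DiEm}$ (Proposition \ref{prop:functor-automata-emulator}), while $R$ is stated as a functor $\mathrm{DiGr} \to \mathrm{SDiGr}$ (Lemma \ref{lem:full_from_simple_to_directed}). My plan is to first observe that $R$ extends naturally to a functor $\mathrm{DiEm} \to \mathrm{SDiEm}$: given a directed emulator $p:G' \to G$, the vertex map is untouched by $R$, and the local edge-surjectivity condition that defines a directed emulator is insensitive to self-loops and to the multiplicity of parallel edges, so $R(p):R(G') \to R(G)$ remains a directed emulator. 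Composing with the ``underlying graph'' projection $\mathrm{SDiEm} \to \mathrm{SDiGr}$ gives the functor $\mathscr{G}$.

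Once this bookkeeping is in place, I would verify fullness and genus-preservation separately. For fullness: Proposition \ref{prop:functor-automata-emulator} gives that $\widetilde{\mathscr{G}}:\mathrm{DFA} \to \mathrm{DiEm}$ is full, and Lemma \ref{lem:full_from_simple_to_directed} (together with Lemma \ref{lem:full_from_automata_to_directed_emulator}) gives fullness of $R$ on the relevant categories. Since fullness is preserved under composition of functors, $\mathscr{G}$ is full. For genus-preservation: Lemma \ref{lem:cat} asserts that $\widetilde{\mathscr{G}}$ preserves the genus of objects, since forgetting labels and distinguished states leaves the underlying multigraph (hence its genus) unchanged. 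Lemma \ref{lem:full_from_simple_to_directed} asserts the same for $R$: removing self-loops and merging parallel edges can always be done inside a disk on any surface realizing the genus. Hence their composite $\mathscr{G}$ preserves genus.

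The main (and only) obstacle is the slight informality of the codomain in the stated composition: one must verify that $R$ makes sense as a functor on the arrow-style categories before composing it with $\widetilde{\mathscr{G}}$. As indicated above, this reduces to a routine check that the directed emulator property is stable under the operations defining $R$. Everything else is formal: composition of full, genus-preserving functors is full and genus-preserving.
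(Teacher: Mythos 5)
Your proposal is correct and matches the paper's intended derivation: the paper states this corollary without proof, as an immediate consequence of Proposition \ref{prop:functor-automata-emulator}, Lemma \ref{lem:cat}, Lemma \ref{lem:full_from_simple_to_directed} and Lemma \ref{lem:full_from_automata_to_directed_emulator}, exactly the composition-of-full-genus-preserving-functors argument you give. Your extra check that the directed emulator condition survives the retraction $R$ (being insensitive to self-loops and edge multiplicities) is a worthwhile clarification of a step the paper leaves implicit.
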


\begin{corollary}
For a regular language $L$,
$$ g(L) = \min \{ g(R(\A)) \ | \ \A \in {\rm{DFA}}, \ L(\A) = L \}.$$
\end{corollary}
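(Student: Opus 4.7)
The plan is to show that this is a direct consequence of the two genus-preservation statements already established, together with the definition of $g(L)$. I would start by unfolding the definition
\[ g(L) = \min \{ g(\A) \ | \ \A \in {\rm{DFA}}(L) \}, \]
where $g(\A)$ is the genus of the underlying undirected multigraph of $\A$. The aim is then to replace $g(\A)$ by $g(R(\A))$ inside the minimum, after which the two quantities coincide term-by-term.

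The key step is to check that $g(\A) = g(R(\A))$ for every $\A \in {\rm{DFA}}$, where by a mild abuse $R(\A)$ stands for $R(\widetilde{\mathscr{G}}(\A)) = \mathscr{G}(\A)$. I would invoke two previously proven facts: Lemma \ref{lem:cat}, which says that the forgetful functor $\widetilde{\mathscr{G}} : {\rm{DFA}}_0 \to {\rm{DiGr}}$ preserves the genus (so $g(\A) = g(\widetilde{\mathscr{G}}(\A))$), and Lemma \ref{lem:full_from_simple_to_directed}, which says that the retraction $R : {\rm{DiGr}} \to {\rm{SDiGr}}$ preserves genus (so $g(\widetilde{\mathscr{G}}(\A)) = g(R(\widetilde{\mathscr{G}}(\A))) = g(R(\A))$). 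Composing the two equalities gives the desired identification.

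Substituting into the definition then yields
\[ g(L) = \min \{ g(\A) \ | \ \A \in {\rm{DFA}}(L) \} = \min \{ g(R(\A)) \ | \ \A \in {\rm{DFA}}, \ L(\A) = L \}, \]
which is the claim. There is essentially no obstacle here; the only bookkeeping point is a notational one, namely making sure the symbol $R(\A)$ is consistently read as $R \circ \widetilde{\mathscr{G}}$ applied to $\A$, i.e. as an object of ${\rm{SDiGr}}$. The content of the corollary is that passing from the underlying undirected multigraph to the underlying simple directed graph (discarding self-loops and identifying parallel arrows with the same orientation) never changes the genus, so the minimum may as well be taken over simple directed graphs underlying DFAs computing $L$.
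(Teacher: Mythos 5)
Your proposal is correct and matches the paper's (implicit) argument exactly: the corollary follows from the definition of $g(L)$ together with the genus-preservation of $\widetilde{\mathscr{G}}$ (Lemma \ref{lem:cat}) and of the retraction $R$ (Lemma \ref{lem:full_from_simple_to_directed}), which is precisely what the preceding corollary on $\mathscr{G} = R \circ \widetilde{\mathscr{G}}$ packages. Nothing further is needed.
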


\begin{remark}
In the final computation of the genus $g(R(\A))$, one can disregard the orientation of the edges. The subtlety is that the minimum runs over \emph{all} deterministic automata such that $L(\A) = L$. Hence the orientation can be dropped only once a specific representative automaton is found.  
\end{remark}

\begin{definition}
Let $G$ be a simple digraph.
We denote SDiEm$(G)$ the set of simple directed emulators of $G$.
\end{definition}

\begin{lemma}
Lemmas $\ref{lem:common-directed-emulator}$ and $\ref{lem:absolute-common-directed-emulator}$ remain valid when replacing ``digraph'' by
``simple digraph'' and ``directed emulator'' by ``simple directed emulator''.
\end{lemma}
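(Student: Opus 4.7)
The plan is to transport the two lemmas from $\mathrm{DiEm}$ to $\mathrm{SDiEm}$ by applying the retraction functor $R: \mathrm{DiGr} \to \mathrm{SDiGr}$ of Lemma \ref{lem:full_from_simple_to_directed}. The essential point to establish is that $R$ carries a directed emulator map whose target is already a simple digraph to a simple directed emulator map.

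First I would verify this compatibility. Let $p: G' \to G$ be a directed emulator map with $G$ simple. Then $R(G) = G$, and $R(G')$ has the same vertex set as $G'$, with at most one edge $x' \to y'$ for each pair $x' \neq y'$ that is adjacent in $G'$. Given any $x' \in V(G')$ and any edge $e$ in $G$ from $p(x')$ to some $w$, the emulator property of $p$ yields $f \in E(G')$ with $s(f) = x'$ and $t(f) \in p^{-1}(w)$. Simplicity of $G$ forces $w \neq p(x')$, hence $t(f) \neq x'$, so $f$ is not a self-loop and descends to an edge of $R(G')$ from $x'$ into $p^{-1}(w)$. Thus $R(p): R(G') \to G$ is a simple directed emulator map.

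For the simple analog of Lemma \ref{lem:common-directed-emulator}, I take $G_1, G_2 \in \mathrm{SDiEm}(G)$, view them as objects in $\mathrm{DiEm}(G)$, apply Lemma \ref{lem:common-directed-emulator} to obtain a common directed emulator $G'_{12} \in \mathrm{DiEm}(G)$, and then set $G_{12} := R(G'_{12})$. Functoriality of $R$ together with $R(G_i) = G_i$ for $i = 1, 2$ and $R(G) = G$ produces the required commuting diagram in $\mathrm{SDiGr}$, and by the compatibility established above, each vertical map is a simple directed emulator map.

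For the simple analog of Lemma \ref{lem:absolute-common-directed-emulator}, I apply the first part with $G$ the trivial simple digraph consisting of a single vertex and no edge: any simple digraph is a simple directed emulator of this trivial digraph, so the first part yields a common simple directed emulator of $G_1$ and $G_2$. The main, and really the only, place to check is the preservation of the emulator condition under $R$; this is where simplicity of the target is essential, since an edge of $G'$ mapping to a self-loop in a non-simple target could be discarded by $R$ and thereby break the local surjectivity condition at some vertex.
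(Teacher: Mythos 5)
Your proposal is correct; the paper in fact states this lemma without proof, and your route---pushing the diagrams of Lemmas \ref{lem:common-directed-emulator} and \ref{lem:absolute-common-directed-emulator} through the retraction functor $R$ of Lemma \ref{lem:full_from_simple_to_directed}---is exactly the argument the paper's machinery is set up to supply. Your explicit verification that $R$ sends a directed emulator map with simple target to a simple directed emulator map (using that simplicity of $G$ rules out the lifted edge being a self-loop) is precisely the one nontrivial point, and you handle it correctly.
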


\begin{definition}
Let $L$ be a regular language. The {\emph{underlying directed graph}} $G(L)$ of $L$ is the simple directed graph associated to the minimal automaton $\A_{\rm{min}}(L)$ of $A$. 
\end{definition}

The following result is the main observation of the section. 

\begin{theorem} \label{th:genus-and-emulator}
Let $L$ be a regular language. The following assertions are equivalent:
\begin{enumerate}
\item[$(1)$] The language $L$ has genus $ \leq g$.
\item[$(2)$] The associated digraph $G(L)$ has a directed emulator of genus $\leq g$.
\item[$(3)$] The associated digraph $G(L)$ has a simple directed emulator of genus $\leq g$.
\end{enumerate}
\end{theorem}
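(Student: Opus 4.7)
The strategy is to establish the cycle $(1)\Rightarrow(3)\Rightarrow(2)\Rightarrow(1)$. The implication $(3)\Rightarrow(2)$ is immediate because every simple directed emulator is a directed emulator.

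For $(1)\Rightarrow(3)$ I would take a deterministic automaton $\A$ with $L(\A)=L$ and $g(\A)\le g$ (trimmed so that $\A\in\mathrm{DFA}_0$). Since $\Amin$ is a final object of $\mathrm{DFA}_0$ among automata recognizing $L$, there is a canonical morphism $\A\to\Amin$, which by Proposition~\ref{prop:functor-automata-emulator} produces a directed emulator $\widetilde{\mathscr{G}}(\A)\to\widetilde{\mathscr{G}}(\Amin)$. Composing with $R$ yields a morphism ${\mathscr{G}}(\A)\to G(L)$ of simple digraphs. I would verify that it is still a directed emulator using the fact that $G(L)$ has no self-loops: every edge $v\to w$ of $G(L)$ satisfies $v\ne w$, so the emulator lift from any $x'\in p^{-1}(v)$ lands in a different fibre $p^{-1}(w)$, and the corresponding edge of $\widetilde{\mathscr{G}}(\A)$ is not a self-loop, hence survives the retraction. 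Genus preservation (Lemmas~\ref{lem:cat} and~\ref{lem:full_from_simple_to_directed}) then gives a simple directed emulator of genus $g(\A)\le g$.

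The main obstacle is the converse $(2)\Rightarrow(1)$: here one must reconstruct an automaton from a purely graph-theoretic object. Given a directed emulator $p:G'\to G(L)$ with $g(G')\le g$, I would define an automaton $\A$ on the state set $V(G')$, with initial state an arbitrary lift of that of $\Amin$ and final states $p^{-1}(F_{\Amin})$. For each $x'\in V(G')$ and each letter $a$, set $v:=p(x')$ and $w:=\delta_{\Amin}(v,a)$; if $v=w$ (a self-transition of $\Amin$ invisible in $G(L)$), put $\delta_\A(x',a):=x'$, and otherwise the emulator property applied to the edge $v\to w$ of $G(L)$ furnishes an edge of $G'$ from $x'$ to some $y'\in p^{-1}(w)$, which I would use to define $\delta_\A(x',a):=y'$. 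By construction $p$ is an automaton morphism $\A\to\Amin$ with $F_\A=p^{-1}(F_{\Amin})$, whence $L(\A)=L$.

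Finally, to see $g(\A)\le g(G')$, I would observe that the underlying multigraph of $\A$ is obtained from that of $G'$ by dropping unused edges, duplicating an edge whenever several letters induce the same transition $x'\to y'$, and adding self-loops at vertices corresponding to self-transitions of $\Amin$. None of these operations changes the genus — parallel edges and self-loops can always be drawn inside existing faces of a given embedding — so any embedding of $G'$ on a surface of genus $g(G')$ extends to one of the multigraph of $\A$ on the same surface. Hence $g(L)\le g(\A)\le g(G')\le g$, closing the cycle.
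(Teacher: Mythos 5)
Your argument is correct and follows the same overall strategy as the paper's: translate automaton morphisms to directed emulator maps via the underlying-graph functor, then reconstruct an automaton from a given emulator of $G(L)$, using genus preservation throughout. The added value of your write-up is that it makes explicit two points the paper glosses over. In $(1)\Rightarrow(3)$ you verify that $R$ carries a directed emulator map $\widetilde{\mathscr{G}}(\A)\to\widetilde{\mathscr{G}}(\Amin)$ to a directed emulator map $\mathscr{G}(\A)\to G(L)$, using that $G(L)$ is simple (so lifted edges cannot become self-loops and hence survive $R$); the paper only cites Prop.~\ref{prop:functor-automata-emulator}, which concerns $\widetilde{\mathscr{G}}$, not $\mathscr{G}$. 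In the converse direction the paper invokes Lemma~\ref{lem:full_from_automata_to_directed_emulator} (fullness of ${\rm SDiEm}\subset{\rm DiEm}$) to ``lift'' $\pi:G'\to G(L)$ to an automaton morphism, but fullness alone does not produce an automaton whose underlying graph is a prescribed emulator; your explicit construction --- choosing $\delta_\A(x',a)$ via the emulator property when $\delta_{\Amin}(p(x'),a)\neq p(x')$ and as a self-loop otherwise, setting $F_\A=p^{-1}(F_{\Amin})$ so that $L(\A)=L$ holds on the nose, and observing that dropping unused edges, duplicating parallel edges and adding self-loops cannot raise genus --- is exactly the missing content. So: same route, but yours is the more complete proof.
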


The equivalence between $(2)$ and $(3)$ follows from Lemma \ref{lem:full_from_simple_to_directed}. 

\begin{corollary}
The genus of a regular language $L$ is equal to the minimum of all genera of simple directed emulators over $G(L)$:
$$ g(L) = \min \{ g(G) \ | \ G \in {\rm{DiEM}}(G(L)) \} = \min \{ g(G) \ | \ G \in {\rm{SDiEM}}(G(L)) \}$$
\end{corollary}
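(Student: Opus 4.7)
\noindent\emph{Proof plan.} The plan is to establish the two implications (1)$\Leftrightarrow$(3), since the equivalence (2)$\Leftrightarrow$(3) follows from the genus-preservation and fullness of the retraction functor $R$ of Lemma \ref{lem:full_from_simple_to_directed}.

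For (1)$\Rightarrow$(3), I would invoke the categorical machinery already in place. Starting from any $\A \in {\rm{DFA}}$ with $L(\A) = L$ and $g(\A) \leq g$, the canonical morphism $\A \to \A_{\rm{min}}(L)$ in ${\rm{DFA}}_0$ can be pushed through the full, genus-preserving functor ${\mathscr{G}} = R \circ \widetilde{\mathscr{G}}$ (corollary to Lemma \ref{lem:full_from_automata_to_directed_emulator}) to produce a simple directed emulator ${\mathscr{G}}(\A) \to G(L)$ whose source has the same genus as $\A$, hence at most $g$.

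For (3)$\Rightarrow$(1), the strategy is to construct explicitly, from a simple directed emulator $p : G' \to G(L)$ of genus at most $g$, a deterministic finite automaton $\B$ recognizing $L$ whose underlying simple digraph is a subgraph of $G'$. I would take $V(G')$ as the state set of $\B$, pick any vertex in $p^{-1}(q_0)$ as its initial state (with $q_0$ the initial state of $\A_{\rm{min}}$), and let the final states be $p^{-1}(F_{\A_{\rm{min}}})$. The transition function is then defined lift-by-lift: for $x' \in V(G')$ and each letter $a$ for which $y := \delta_{\A_{\rm{min}}}(p(x'), a)$ is defined, set $\delta_{\B}(x', a) = x'$ when $y = p(x')$, and otherwise choose any $y' \in p^{-1}(y)$ such that $(x', y')$ is an edge of $G'$, which is possible by the defining property of directed emulators applied to the edge $(p(x'), y)$ of $G(L)$. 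With this construction, $p$ intertwines the transitions of $\B$ and $\A_{\rm{min}}$, maps initial state to initial state, and sends final states onto final states, so a straightforward induction on word length yields $L(\B) = L$.

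The step that requires genuine care, and that I expect to be the only real obstacle, is verifying that the genus of $\B$ does not exceed $g(G')$. The asymmetry is that $G(L)$ is the \emph{simple} digraph of $\A_{\rm{min}}$, which has forgotten both self-loops and multiplicities of letters sharing a common source-target pair, so the emulator condition only supplies lifted edges corresponding to \emph{non-loop} edges of $G(L)$. The remedy is that self-loops of $\A_{\rm{min}}$ are reintroduced in $\B$ as self-loops and thus collapsed by $R$, while several letters realizing the same oriented edge of $G(L)$ are routed through chosen lifted edges, each of which already lies in $G'$. Consequently the underlying simple digraph of $\B$ embeds in $G'$, so $g(\B) \leq g(G') \leq g$, which completes the argument.
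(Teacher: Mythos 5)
Your proposal is correct and follows essentially the same route as the paper: the corollary is an immediate restatement of Theorem \ref{th:genus-and-emulator}, whose proof in the paper runs exactly your two directions --- the full, genus-preserving functor for (1)$\Rightarrow$(3), and a lift of the emulator map to an automaton morphism for (3)$\Rightarrow$(1). Your explicit state-by-state construction of the lifted automaton $\B$ (choosing one lifted edge per letter, reinstating self-loops, and checking that the resulting simple underlying digraph embeds in $G'$) merely spells out what the paper compresses into the single phrase ``the directed emulator map lifts to an automaton morphism''.
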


%

Theorem \ref{th:genus-and-emulator} allows to translate
questions about the genus of languages into questions 
about directed emulators of simple digraphs (and vice-versa). 

\begin{corollary}
If two languages $L$ and $L'$ have the same underlying directed graphs $G(L)$ and $G(L')$, then $g(L) = g(L')$.
\end{corollary}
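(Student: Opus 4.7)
The plan is to deduce the statement directly from the preceding corollary of Theorem \ref{th:genus-and-emulator}, which expresses $g(L)$ as an invariant that depends only on the digraph $G(L)$.

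More precisely, by that corollary we have
\begin{equation*}
g(L) \;=\; \min \{ g(G) \ | \ G \in {\rm{SDiEM}}(G(L)) \},
\end{equation*}
and similarly for $L'$. The right-hand side is a function of the simple digraph $G(L)$ alone: it involves only the set ${\rm{SDiEM}}(G(L))$ of simple directed emulators of $G(L)$ together with their genera, no reference to labels, distinguished states, or the language itself being needed. Hence, if we assume that $G(L) = G(L')$ as simple directed graphs, then ${\rm{SDiEM}}(G(L)) = {\rm{SDiEM}}(G(L'))$, and consequently the two minima are equal, giving $g(L) = g(L')$.

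I would therefore present the proof in essentially one line: cite the corollary, observe that the expression $\min \{ g(G) \ | \ G \in {\rm{SDiEM}}(G(L)) \}$ is a function of $G(L)$ only, and conclude. There is no real obstacle here since the heavy lifting has been done in Theorem \ref{th:genus-and-emulator}; the only point worth emphasizing, to prevent confusion, is that the hypothesis is about underlying simple digraphs of the \emph{minimal} automata (as fixed by Definition of $G(L)$), so that both sides of the comparison use the canonical representative and the equality $G(L)=G(L')$ is unambiguous.
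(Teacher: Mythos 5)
Your proof is correct and follows exactly the route the paper intends: the genus formula $g(L) = \min \{ g(G) \mid G \in {\rm{SDiEM}}(G(L))\}$ from the corollary of Theorem \ref{th:genus-and-emulator} depends only on the simple digraph $G(L)$, so equality of the underlying digraphs forces equality of the genera. The paper leaves this step implicit, and your one-line argument (including the remark that $G(L)$ is canonically determined via the minimal automaton) is precisely the justification it relies on.
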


\begin{corollary}
Let $A$ be a deterministic automaton and $L$ be the language computed by $A$. Let $A'$ be a deterministic automaton obtained from $A$ by the following operations:

\begin{figure}[!htb]
    \centering
    \begin{minipage}{.5\textwidth}
        \centering
        \includegraphics[scale=0.5]{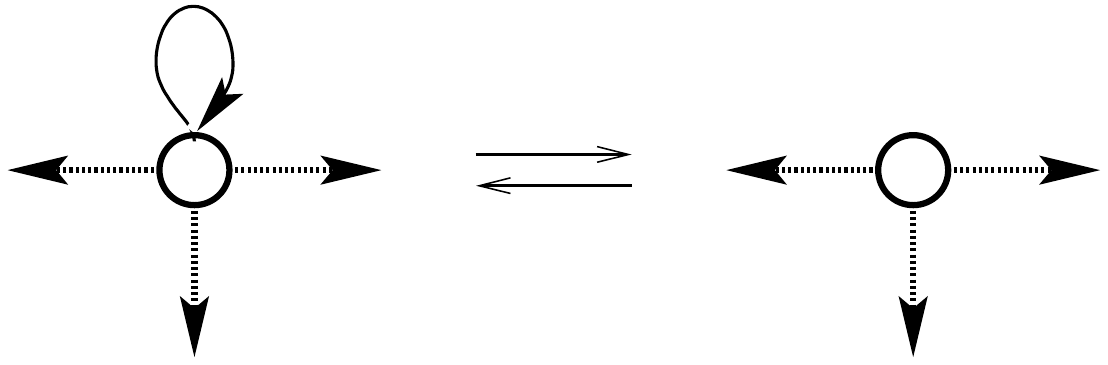}
        \caption{Adding or removing a self-loop}
        \label{fig:directed-cycle}
    \end{minipage}%
    \begin{minipage}{0.5\textwidth}
        \centering
        \includegraphics[scale=0.5]{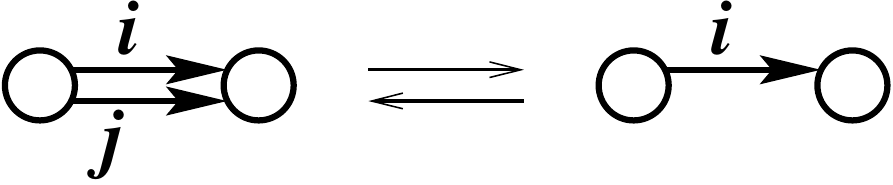}
        \caption{Adding or removing one extra transition between states}
        \label{fig:edge-contraction}
    \end{minipage}
\end{figure}
Let $L'$ be the language computed by $A'$. Then $g(L) = g(L')$.
\end{corollary}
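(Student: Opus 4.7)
The plan is to invoke the preceding corollary, which guarantees $g(L) = g(L')$ as soon as $G(L) = G(L')$. So the task reduces to exhibiting an equality of simple directed graphs at the level of the minimal automata.

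By Lemma \ref{lem:full_from_simple_to_directed}, the retraction functor $R\colon {\rm{DiGr}} \to {\rm{SDiGr}}$ forgets self-loops and merges parallel oriented edges. The first operation (addition or removal of a self-loop) and the second operation (addition or removal of one extra transition between two states, yielding a parallel oriented edge) are therefore invisible to $R \circ \widetilde{\mathscr{G}}$. Hence $R(\widetilde{\mathscr{G}}(A)) = R(\widetilde{\mathscr{G}}(A'))$ as simple directed graphs, and both automata share the same genus (which, by Lemma \ref{lem:full_from_simple_to_directed}, is the genus of the underlying multigraph).

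To promote this to $G(L) = G(L')$, one checks that the canonical projections $A \to \Amin(L)$ and $A' \to \Amin(L')$ induce the same simple directed graph on their images: the added or removed self-loop (resp.\ parallel edge) does not interact with the Myhill--Nerode equivalence in a way that is visible to $R$, because $R$ discards exactly the extra structure introduced by the operation. Once $G(L) = G(L')$ is established, the preceding corollary immediately yields $g(L) = g(L')$.

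The main obstacle is this last passage, i.e.\ showing that the operations lift through minimization without altering the simple directed graph of the minimal automaton. The verification splits into cases according to whether the extra self-loop (or parallel edge) carries a fresh letter, in which case the Myhill--Nerode equivalence is essentially unchanged, or a letter already present in the alphabet, where a short case analysis on the transition at the affected state is needed. A more flexible alternative, which side-steps any worry about how minimization interacts with the operations, is to argue via Theorem \ref{th:genus-and-emulator}: any simple directed emulator of $G(L)$ yields a simple directed emulator of $G(L')$ of the same genus by applying the same operation at each vertex of the relevant fiber, since self-loops and parallel edges are invisible both to $R$ and to the genus of the underlying undirected multigraph.
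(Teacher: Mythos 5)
Your proposal follows the same route as the paper: reduce to the equality $G(L) = G(L')$ of the underlying simple directed graphs of the respective minimal automata, for which the operations are invisible to the retraction $R$, and then invoke the preceding corollary. The paper's own proof is a one-line assertion of exactly the fact you identify as the main obstacle (that the operations do not affect the simple directed graph of the minimal automaton), so your discussion of how the operations interact with minimization only supplies detail the paper omits.
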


Caution needs to be exercised to apply this corollary since some of the operations (those adding transitions) do not preserve a priori determinism.

\begin{proof}
The operations do not affect the underlying directed graph of the respective minimal automata of $A$ and $A'$, so $G(L) = G(L')$.
\end{proof}

\section{The proofs} \label{sec:proofs}

%

\subsection{Proof of Theorem \ref{th:new-hierarchy-example} (A new explicit example of genus hierarchy with exact genus formula)}  \label{subsec:new-hierarchy-proof}

The language $Z_{2k+1}^{1,2,\ldots, k}$ is computed by
the following automaton, denoted $A = A_{2k+1}^{1,2,\ldots, k}$. The set of states is $Q = \mathbb{Z}/(2k+1)\mathbb{Z}$, with initial and final state $0$. The transitions are given by the rule $i\  \overset{j}{\to} i+j$ for $i \in Q$ and $j \in \{1, 2, \ldots, k\} \subset \mathbb{Z}/(2k+1)\mathbb{Z}$. It is readily observed that $A$ is the minimal automaton. The underlying unoriented multigraph is the complete graph $K_{2k+1}$. We verify two properties:

- $K_{2k+1}$ has no self-loop and has no simple cycle of length $2$ (the minimal length of a simple cycle
is $3$)

- The cardinality of the alphabet is $k \geq 4$.

According to Theorem~\ref{th:genus_estimate} (see also \cite[Th.~8]{BD}), $g(Z_{2k+1}^{1, 2, \ldots, k}) \geq 1 + \frac{(k-3)(2k+1)}{6}$. To prove that this lower bound for the genus is actually an equality, we notice that the genus of the minimal automaton provides an upper bound. So
$$ 1 + \frac{(k-3)(2k+1)}{6} \leq g(Z_{2k+1}^{1,2,\ldots, k}) \leq g(A) = g(K_{2k+1}) = 
\left\lceil \frac{(2k-2)(2k-3)}{12} \right\rceil. $$
The last equality is the exact formula for the genus
of the complete graph on $2k+1$ vertices.
It remains to observe that the ceiling function of the lower bound is exactly the upper bound. This is the desired result.

\subsection{Proof of Theorem \ref{th:hierarchy-2-letters} (Existence of a genus hierarchy for languages on a $2$-letter alphabet)}

Let $A = \mathbb{Z}/2{\mathbb{Z}}$ be the alphabet. For $k \geq 5$, consider the finite deterministic automaton $\A_{k}$ defined as follows. The set of states is $Q_{k} = \mathbb{Z}/6\mathbb{Z} \times \mathbb{Z}/k\mathbb{Z}$. The transitions are $$(i,j) \overset{0}{\to} (i+1, j), \ \ (i,j) \overset{1}{\to} (2i, j+1).$$ Pick the state $(0,0)$ as the initial and unique final state. See Fig. \ref{fig:two-letters-hierarchy} for a picture of the automaton $\A_{k}$.

\begin{figure}[!h]
\includegraphics[scale=0.3]{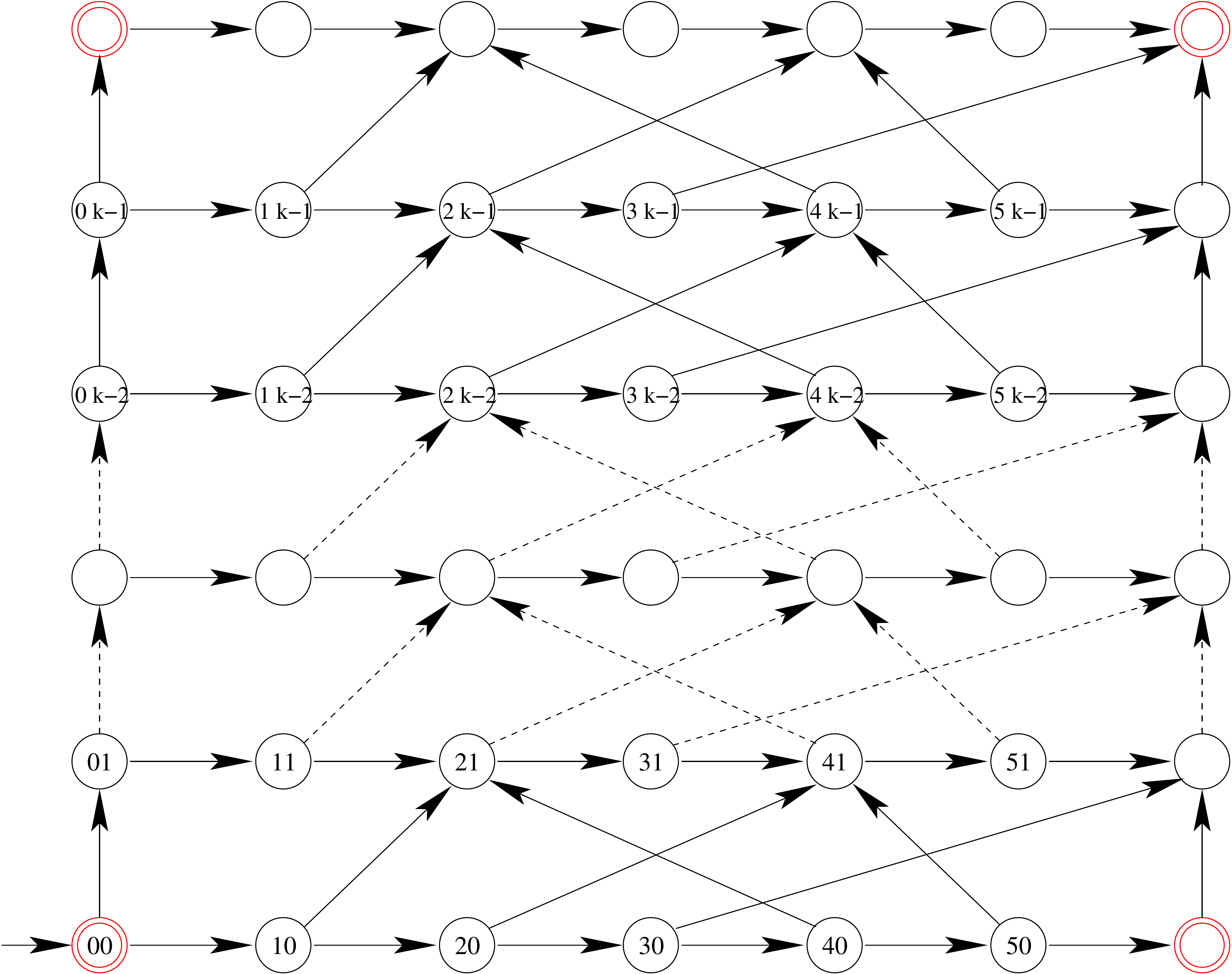}
\caption{The automaton $\A_{k}$ drawn (with crossings) on a torus: the states $(j,0)$ and $(j,k)$ for $0 \leq j \leq 6$ (resp. the states $(0,l)$ and $(6,l)$ for $0 \leq l \leq 6$) are to be identified, as well as the corresponding transitions.}\label{fig:two-letters-hierarchy}
\end{figure}

It is easily seen that $\A_k$ is deterministic, complete and minimal. It is readily 
verified that $\A_{k}$ has no simple cycle of length less than or equal to $4$. Therefore Theorem \ref{th:genus_estimate} applies: the language $L_{k}$ recognized by $\A_{k}$ has genus $g(L) \geq 1 + \frac{3k}{20}$. This implies the desired result.

\subsection{Proof of Theorem \ref{th:exponential_size} (Planar regular languages with exponential topological size) }  \label{sec:exponential_size}

On the alphabet $\mathbb{Z}/5 \mathbb{Z}$, given $n\geq 0$, let us consider the automaton $A_n = (Q_n, i_n, F_n, \delta_n)$ defined as follows. The set of states is $Q_n = \mathbb{Z}/5 \mathbb{Z} \times \{0, \cdots, n\} \cup \{p_0, \top, \bot\}$. The initial state is $p_0$, there is a unique final state $\top$. For all $a, b \in \mathbb{Z}/5 \mathbb{Z}$, let $\delta_n(p_0, a) = (a,0)$, $\delta_n((a, n),a) = \top$, if $a \neq b$, $\delta_n((a,n),b) = \bot$ and for $j<n$, $\delta_n((a,j),b) = (a+b,j+1)$. Its corresponding language is $L_n = \{ a_0\cdots a_{n+1} \mid \sum_{i = 0,n} a_i = a_{n+1} \}$.  

It is straightforward that all states of $A_n$ are accessible and that $A_n$ is minimal, its states being non equivalent. The language $L_n$ is finite, thus planar. Indeed, one may span the complete tree of depth $n+2$ to describe the language which has thus topological size smaller than $5^{n+2}$. Let us suppose that $B_n = (R_n, j_n, G_n, \eta_n)$ is a minimal planar automaton recognizing $L_n$. Without loss of generality, we can suppose that its states have the shape $(s,t)$ with $s \in Q_n$ and $t \in T$, that is $\pi : (s,t) \mapsto s$ defines the projection on the minimal automaton. 

We qualify states of the shape $(a,j,t)$ with $j<n$ to be internal states. 
For any internal state $s = (a,j,t)$, the transition function $\eta_n(s, \cdot) : \mathbb{Z}/5 \mathbb{Z} \to R_n$ is injective, because $\delta_n=\pi\circ\eta_n$ is injective. Explicitly, for any $b\neq c\in\mathbb{Z}/5 \mathbb{Z}$, we have $\eta_n(s,b)\neq\eta_n(s, c)$.


Let $G_n = {\widetilde{{\mathscr{G}}}}(B_n)$ be the underlying (planar) graph of $B_n$. 
Given $j \in \{0, \ldots, n-1\}$, let $S_j$ be the subgraph of $G_n$ where any vertices outside $\mathbb{Z}/5 \mathbb{Z} \times \{ j, j+1 \} \times T$ have been removed with their incoming and outcoming edges. Being a subgraph of $G_n$, the graph $S_j$ is planar.  We denote $K$ (respectively $M$) the set of states of $B_n$ of the shape 
$(a, j, t)$  (resp. $(a, j+1, t)$) and $k = |K|$ (resp. $m = |M|$).  

Any state $s \in K$ is internal. We have seen above that $\eta_n(s,\cdot)$ is injective. Thus, there are exactly $5$ outgoing edges from state $s$, each of which pointing to a different state. Two partial conclusions. First, let $e$ be the number of edges in $S_j$, we have $e = 5k$. Second, there are no bigons in $S_j$: none of the patterns $s \to s' \to s$ or $s\to s'\leftarrow s$ can happen.

Let $f$ be the number of faces in $S_j$. Euler's formula for planar graphs applied in $S_j$ gives us $k + m + f = 5 k + 2$, that we can rewrite:
:
\begin{equation} m + f = 4k + 2.\label{eq-jjj}
\end{equation}

Let $f_i$ be the number of $i$-gon in $S_j$.  Thus, $f = f_1 + f_2 + \cdots$. Observe that due to the definition of $B_n$, there are neither simple odd polygons (that is a $2i+1$-gon for $i \in \N$), neither bigons as justified above. Thus, $f = f_4 + f_6 + \cdots$. A simple counting argument shows that $2 \times e = 4 f_4 + 6 f_6 + \cdots=10k$. In other words, $\cfrac{5k}{2}= f_4 + \cfrac{6}{4} f_6 + \cdots \geq f_4 + f_6 + \cdots = f$.  By relation~(\ref{eq-jjj}), we get 
\begin{equation}
m = 4 k + 2 - f \geq \cfrac{3k}{2}+2 \geq \cfrac{3k}{2} \label{eq_induction}
\end{equation}
Take $K=3/2$. Denote by $N_j$ the states in layer $j$, that is of the shape $(a,j,t)$, and by $n_j$ the cardinal of $N_j$. By induction on $j\geq 0$, we prove $n_j \geq 5 \times (3/2)^{j}$ for $j \leq n$. For the base case, observe that there are at least $5$ states in each layer (there are $5$ in the minimal automaton). The induction step is a direct consequence of the inequality~(\ref{eq_induction}). The result follows.


\subsection{Proof of Theorem \ref{th:genus_estimate} (Genus estimate)}

We need to prove the stated lower bound. Given an integer $k \geq 1$ and a cellular embedding of a graph in a surface, we let $f_{k}$ denote the number of faces of length $k$.

Set $$A(j) = \displaystyle \sum_{k \geq j} \frac{k(m-1)-2m}{4m} f_{k}, \ \ \ B(j) = \sum_{k \geq j} k \, f_{k}.$$

Then
$$ A(j) \geq \left( \frac{m-1}{4m} - \frac{1}{2j} \right) \, B(j).$$
Let $\A$ be a complete minimal genus finite deterministic automaton recognizing $L$. 
By \cite[Th. 5]{BD}, $g(\A) = 1 + A(1)$. By hypothesis,
$\Amin$ has no simple cycle of length less or equal to $j-1$. It follows from Lemma \ref{lem:nosimplecycle} below (\S \ref{sec:cycles}) that $\A$ has no simple cycle of length less or equal to $j-1$.
Consider a minimal embedding (hence cellular) of $\A$ into a genus $g(L)$ oriented closed surface $\Sigma$. Consider now a face $f$ in $\Sigma$. 
 If the length of the face is less or equal to $4$, then any cycle $c$ of $\A$ bounding $f$ must be simple.  We deduce that there is no face of length less than $j-1$: $f_{1} = \cdots = f_{j-1} = 0$. Hence

\begin{align*}
g(\A)  = 1 + A(j)  & \geq  1+ \left( \frac{m-1}{4m} - \frac{1}{2j} \right) \, B(j) \\
& \geq 1 + \left( \frac{m-1}{4m} - \frac{1}{2j} \right)\, B(1) \\
& = 1+ \left( \frac{m-1}{4m} - \frac{1}{2j} \right)\, 2m|\A|
\end{align*}
Therefore
\begin{equation}
g(\A) \geq 1 + \frac{(j-2)m-j}{2j}|\A|. \label{eq:key_genus_ineq}
\end{equation}
Since $|\A| \geq |\Amin| = |L|_{\rm{set}}$, we deduce the desired result.


\begin{remark} \label{rem:inequality-for-automata}
The inequality $(\ref{eq:key_genus_ineq})$ holds for any complete minimal genus deterministic automaton recognizing $L$ under the hypotheses of Theorem \ref{th:genus_estimate}. It is in general sharper than the lower bound of the theorem.
\end{remark}

\begin{remark}
There are simple graphs supporting nonsimple bounding closed walks (see Fig. \ref{fig:special_graphs}): the complement in the $2$-sphere of each of these embedded graphs is an open cell whose boundary is a nonsimple closed walk in the graph.
It is left to the reader to verify that if one of the graphs below is the underlying simple graph of a $2$-letter automaton $\A$, then the underlying multigraph of $\A$ has a {\emph{simple}} cycle of length less or equal to $4$.
\begin{figure}[!h]
\includegraphics[width=0.25\textwidth]{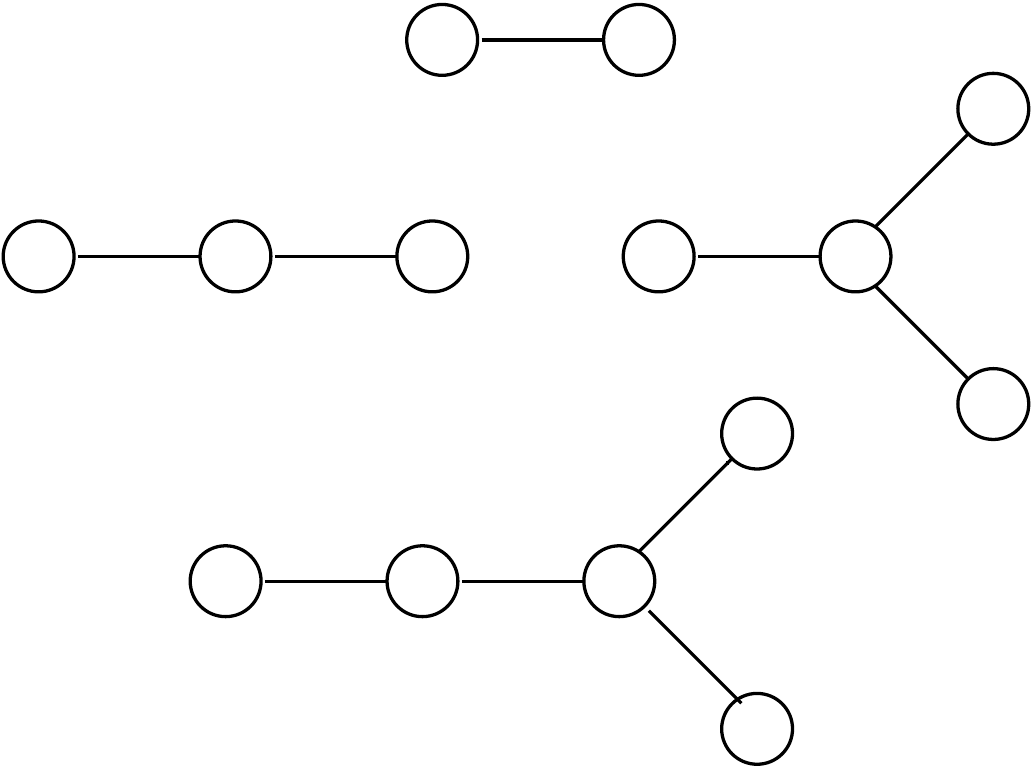}
\caption{\label{fig:special_graphs} Planar graphs supporting nonsimple bounding cycles.}
\end{figure}
\end{remark}

\subsection{Proof of Theorem \ref{th:computable-class} (Computability of genus)}


%

Let $\A$ be a deterministic finite automaton such that $L(\A) = L$ and $g(\A) = g(L)$. Let $\Amin$ be
the minimal deterministic automaton for $L$.
By Lemma \ref{lem:nosimplecycle}, since $\Amin$ has no simple cycles of length $\leq j-1$, neither has $\A$. According to the inequality $(\ref{eq:key_genus_ineq})$ and \cite[Prop. 2]{BD},
$$ 1 + \left( \frac{(j-2)m-j}{2j} \right)\, |\A| \leq g(A) = g(L) \leq g(\Amin) \leq 1 + \left( \frac{m-1}{2} \right) |\Amin|.$$
Since $|L|_{\rm{set}} = |\Amin| \leq |\A|$, we have
$$ 1 + \left( \frac{(j-2)m-j}{2j} \right)\, |L|_{\rm{set}} \leq g(\A) \leq 1 + \left( \frac{m-1}{2} \right) |L|_{\rm{set}}.$$
The set $$E = \left\{ n \in \mathbb{N} \ | \ |L|_{\rm{set}} \leq n, \ 1 + \left( \frac{(j-2)m-j}{2j}\ n \right) \leq g(\Amin) \right\}$$ is finite and contains $|L|_{\rm{top}}$. Let $n \in E$. There is only a finite number of DFAs of
fixed size $n$, hence a finite number of DFAs of size $n$ and computing $L$. Therefore the set
$$ F = \{ \A \in {\rm{DFA}}(\Amin) \ | \ L(\A) = L, \ |\A| \in E \} $$ is finite and contains every deterministic finite automaton computing $L$ of minimal genus. Now for each individual automaton $\A \in F$, its genus is computable (computation of the genus of a graph). The minimum of the finite list of genera thus computed is the genus of $L$.

More generally, given a finite graph $\A$, there is a known algorithm to construct all embeddings of $\A$ into a surface of minimal genus. It follows from the argument above that there is 
an algorithm to construct every deterministic finite automaton $\A$ such that $L(\A) =  L$ and
$g(\A) = g(L)$. In particular, there is only a finite number of them $\A_1, \ldots, \A_r$. In particular, one can compute $|L|_{\rm{top}} = \min \{ |\A_1|, \ldots, |\A_{r}| \}$. This completes the proof.

%
%

\subsection{Proof of Theorem \ref{th:finiteness} (Finiteness of minimal fixed genus automata)}

Let $m \geq 2$. According to $(\ref{eq:key_genus_ineq})$ (see Remark \ref{rem:inequality-for-automata}), $g(\A) > 1$ for any deterministic finite automaton $\A$ without simple cycles of length $\leq j-1$. It follows that there is no deterministic finite automaton $\A$ without simple cycles of length $\leq j-1$ such that $g(\A) \in \{0, 1\}$. In particular, there is no language $L \in {\mathscr{C}}_{j}(m)$
of genus $0$ or $1$.

Let $g \geq 2$. Let $\A$ be any deterministic finite automaton $\A$ without simple cycles of length $\leq j-1$, such that $g(\A) = g(L(\A)) = g$. According to  $(\ref{eq:key_genus_ineq})$,
$$ 1 + \frac{(j-2)m-j}{2j}\, |\A| \leq g. $$
Since the set of sizes
$$ \left\{ n \geq 0 \ | \ 1 + \frac{(j-2)m-j}{2j}\, n \leq g \right\} $$ is finite and since there is a finite number of automata with prescribed size $n$ and prescribed alphabet size $m$, the claimed result follows.

\subsection{Proof of Theorem \ref{th:genus-and-emulator} (Equivalence with genus $g$ directed emulators)}

It suffices to prove the equivalence between $(1)$ and $(3)$. Suppose that $g(L) \leq g$. Then there exists $\A \in {\rm{DFA}}$ 
such that $L(\A) = L$ and $g(\A) \geq g$. By Prop. \ref{prop:functor-automata-emulator}, ${\mathscr{G}}(\A)$ is a directed emulator over ${{\mathscr{G}}}(\Amin)$. By definition (\S \ref{sec:automata_and_graphs}),
$g({\mathscr{G}}(\A)) = g(\A) \geq g$. Therefore, $G(L) = {{\mathscr{G}}}(\Amin(L))$ has a directed emulator of genus $ \leq g$.

Conversely, let $\pi:G' \to G(L)$ be a directed emulator where $G'$ is a simple digraph of genus $ \leq g$. By Lemma \ref{lem:full_from_automata_to_directed_emulator}, the
directed emulator map lifts to an automaton morphism $\A \to \Amin(L)$, with $g(\A) = g(G') \leq g$. Hence $g(L) \leq g(\A) \leq g$.

\subsection{Cycles and directed emulators} \label{sec:cycles}


\begin{lemma} \label{lem:nosimplecycle}
 Let $k \geq 1$. Assume that a directed graph $G$ has no simple cycle of length less than or equal to $k$. Then neither has any directed emulator $\tilde{G}$ over $G$.
 \end{lemma}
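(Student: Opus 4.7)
The plan is to argue by contradiction. Suppose $\tilde{G}$ has a simple cycle $\tilde{C}$ of length $\ell \leq k$, written as a closed walk $\tilde{v}_0, \tilde{e}_1, \tilde{v}_1, \ldots, \tilde{e}_\ell, \tilde{v}_\ell = \tilde{v}_0$ in the underlying undirected multigraph of $\tilde{G}$ with pairwise distinct edges $\tilde{e}_i$. The directed emulator map $p:\tilde{G}\to G$, viewed as a morphism of digraphs, induces a morphism of underlying undirected multigraphs. Applying it to $\tilde{C}$ yields a closed walk $C = v_0, e_1, v_1, \ldots, e_\ell, v_\ell = v_0$ in $G$ of the same length $\ell$, with $v_i=p(\tilde{v}_i)$ and $e_i=p(\tilde{e}_i)$. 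It then suffices to extract from $C$ a simple cycle in $G$ of length at most $\ell \leq k$, which would contradict the hypothesis on $G$.

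The extraction is carried out by strong induction on $\ell$. The base case $\ell = 1$ is immediate: a simple cycle of length $1$ in $\tilde{G}$ is a self-loop $\tilde{e}:\tilde{v}\to\tilde{v}$, and $p(\tilde{e})$ is a self-loop at $p(\tilde{v})$ in $G$, i.e., a simple cycle of length $1$. For $\ell \geq 2$, if the edges $e_1, \ldots, e_\ell$ of $C$ are pairwise distinct in $G$, then $C$ is itself a simple cycle of length $\ell$ in $G$ and we are done. Otherwise, fix $1 \leq i < j \leq \ell$ with $e_i = e_j$; one excises the closed subwalk of $C$ between positions $i$ and $j$ to produce a pair of strictly shorter closed walks in $G$, each arising as the projection of an analogous shorter simple subcycle extracted from $\tilde{C}$ in $\tilde{G}$. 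The induction hypothesis, applied to the shorter subcycle in $\tilde{G}$, then yields a simple cycle of length $< \ell$ in $G$.

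The main obstacle lies in the ``backtrack'' case in which the common edge of $G$ is traversed in opposite directions in $C$, notably when $j = i+1$: the two distinct edges $\tilde{e}_i, \tilde{e}_{i+1}$ of $\tilde{G}$ are both incident to $\tilde{v}_i$ and project to the same edge of $G$. A direct inspection of sources and targets shows that $\tilde{v}_{i-1}$ and $\tilde{v}_{i+1}$ must then project to the same vertex of $G$; when moreover $\tilde{v}_{i-1} = \tilde{v}_{i+1}$ in $\tilde{G}$, the pair $\{\tilde{e}_i, \tilde{e}_{i+1}\}$ already forms a simple cycle of length $2$ in $\tilde{G}$, to which the induction hypothesis applies and produces a simple cycle of length $\leq 2$ in $G$. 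The remaining subcase in which $\tilde{v}_{i-1}$ and $\tilde{v}_{i+1}$ are distinct in $\tilde{G}$ is the most delicate step: it is handled by exploiting the surjectivity of the emulator map on outgoing edges at each vertex to lift a suitable sub-configuration back to $G$ and conclude in a uniform way. Iterating the entire shortening procedure eventually yields a simple cycle of $G$ of length $\leq \ell$, the desired contradiction.
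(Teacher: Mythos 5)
Your reduction to ``the image closed walk $C$ in $G$ must contain a simple cycle of length $\le\ell$'' is the same starting point as the paper's one-line proof, which simply asserts that $C$ ``admits a decomposition into a product of simple cycles.'' You go further and correctly identify the obstruction that the paper silently skips over: the projected walk can \emph{backtrack} across an edge of $G$ (two distinct, adjacent edges $\tilde e_i,\tilde e_{i+1}$ of $\tilde C$ mapping to a single edge $e$ of $G$, traversed forward then backward), and such a backtrack contributes no simple cycle. That observation is a genuine improvement in precision over the paper's terse argument.

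But the proposal does not close the gap it opens. In the subcase $\tilde v_{i-1}=\tilde v_{i+1}$, invoking the induction hypothesis on the length-$2$ cycle $\{\tilde e_i,\tilde e_{i+1}\}$ only works when $\ell\ge 3$; when $\ell=2$ that length-$2$ cycle \emph{is} $\tilde C$, so the appeal to the IH is circular. More seriously, in the remaining subcase $\tilde v_{i-1}\neq\tilde v_{i+1}$ the claim that ``exploiting the surjectivity of the emulator map on outgoing edges'' lets one ``lift a suitable sub-configuration back to $G$ and conclude in a uniform way'' is not an argument. The directed-emulator condition is a \emph{local surjectivity} requirement on outgoing edges and imposes no injectivity whatsoever on the edge map $p$; in particular nothing forbids two distinct edges of $\tilde G$ pointing \emph{into} a common vertex $\tilde v_i$, with sources in the same fiber, from both mapping to the same edge $e$ of $G$. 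When this happens the image walk $C$ backtracks, and after excising one still faces the two loose endpoints $\tilde v_{i-1},\tilde v_{i+1}$ in $\tilde G$, which do not reconnect into a shorter closed walk in $\tilde G$; iterating can leave a walk in $G$ of the form $x\to y\to x\to z\to x$ which contains no simple cycle of $G$ at all. So the ``most delicate step'' is exactly where a real idea is still missing, and the induction as written does not go through. (For what it is worth, the paper's own proof has the identical hidden gap: its asserted decomposition of $C$ into simple cycles fails in the presence of backtracks, so you have at least located where the difficulty truly lies.)
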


\begin{proof}
Suppose that $\tilde{G}$ has a simple cycle $c'$ of length $l \leq k$. Its image in $G$ is a closed path $c$ of length $l$. The closed path $c$ admits a decomposition into a product of
simple cycles, each of which has length less than or equal to $l \leq k$.
\end{proof}

\begin{lemma}
The property for a deterministic automaton $\A$ to have no simple cycle of length $l \leq k$ is a property of the language $L(\A)$.
\end{lemma}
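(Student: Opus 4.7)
My plan is to reduce the assertion to a canonical property of the minimal deterministic automaton of the language. By Myhill--Nerode, the minimal automaton $\Amin(L)$ associated with a regular language $L$ is uniquely determined up to isomorphism, so any graph-theoretic property of its underlying digraph---such as the absence of simple cycles of length $\leq k$---is a \emph{bona fide} property of $L$. I will show that, for any $\A \in \mathrm{DFA}_0$ with $L(\A) = L$, the underlying undirected multigraph of $\A$ has no simple cycle of length $\leq k$ if and only if the one of $\Amin(L)$ does not.

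The key structural input is that the universal property of $\Amin(L)$ provides a canonical morphism $p : \A \to \Amin(L)$ which, by Proposition \ref{prop:functor-automata-emulator}, induces a directed emulator map of the underlying digraphs. One direction of the equivalence then follows immediately from Lemma \ref{lem:nosimplecycle} applied to this emulator: if $\Amin(L)$ has no simple cycle of length $\leq k$, then $\A$ has none either. In particular, the property, as a property of the language defined via $\Amin(L)$, propagates to every equivalent DFA.

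For the converse direction, I would lift a hypothetical simple cycle of length $l \leq k$ in $\Amin(L)$ back to $\A$, starting from any preimage $q_0'$ of the base vertex and iteratively invoking the emulator lifting property along the labelled transitions of the cycle. This defines a return map $\phi : p^{-1}(q_0) \to p^{-1}(q_0)$ on the finite fiber, from which a closed walk in $\A$ can be extracted via a pigeonhole argument and then decomposed into simple cycles. The main obstacle is twofold: first, the walk obtained from a period-$m$ orbit of $\phi$ has length $ml$, so a careful analysis is needed to exhibit a simple cycle of length $\leq k$ in $\A$ (ideally by producing a fixed point of $\phi$); second, the simple cycle in $\Amin(L)$ need not be a directed cycle---some edges may be traversed against the transition direction---and the directed emulator property only supplies forward lifts. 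Handling this last point is the most delicate step; I would address it either by passing to a common refinement afforded by Lemma \ref{lem:common-directed-emulator} that accommodates both orientations simultaneously, or by a direct combinatorial argument on the underlying undirected multigraph of $\A$ exploiting the determinism of $\A$ in each step of the traversal.
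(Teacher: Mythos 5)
Your first two paragraphs already contain the paper's entire proof: the authors define ``no simple cycle of length $\leq k$'' as a property of the language via the (canonical, Myhill--Nerode) minimal automaton, observe that $\widetilde{\mathscr{G}}(\A)$ is a directed emulator of $\widetilde{\mathscr{G}}(\Amin)$ by Proposition \ref{prop:functor-automata-emulator}, and invoke Lemma \ref{lem:nosimplecycle} to descend the property to every equivalent DFA. That is exactly your argument, and it is the only direction the paper ever uses (e.g.\ in the proofs of Theorems \ref{th:genus_estimate} and \ref{th:computable-class}).

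The converse you sketch in your last paragraph, however, is not completable, because the statement you are aiming at there is false: a short simple cycle in $\Amin$ need not force a short simple cycle in $\A$. Lemma \ref{lem:preimage-cycle} is sharp on this point --- the lift of a simple cycle of length $l$ is a simple cycle of length only a \emph{multiple} of $l$, and that multiple can exceed $k$. Concretely, take $L = (11)^{*} = Z_{2}^{1}$: its minimal automaton has a simple cycle of length $2$, yet the $4$-state DFA with states $\mathbb{Z}/4\mathbb{Z}$, transitions $i \to i+1 \bmod 4$, initial state $0$ and accepting states $\{0,2\}$ computes the same language and its underlying multigraph is a $4$-cycle, with no simple cycle of length $\leq 3$. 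So your hoped-for fixed point of the return map $\phi$ need not exist, and no amount of care with the non-directed edges will rescue this direction. The fix is simply to drop it: the lemma is to be read as saying that the property, \emph{defined on the language through} $\Amin$, is inherited by every DFA recognizing the language, and that is precisely what your emulator argument establishes.
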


\begin{proof}
Let $\Amin$ be the minimal automaton of $L(\A)$.
Set $G = {\widetilde{\mathscr{G}}}(\Amin)$: the digraph $\tilde{G} = {\widetilde{\mathscr{G}}}(\A)$ is a directed emulator of $G$ (by Prop. \ref{prop:functor-automata-emulator}).
\end{proof}

\begin{lemma} \label{lem:preimage-cycle}
Let $\pi:\tilde{G} \to G$ be a directed emulator map. Let $c$ be a simple cycle of length $k$ in $G$. Then $\pi^{-1}(c)$ contains (a simple path followed by) a simple cycle of length a multiple of $k$ in $\tilde{G}$. Furthermore, this cycle respects the direction of $\tilde{G}$ if $c$ respects the direction of $G$.
\end{lemma}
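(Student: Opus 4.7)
The plan is to construct an infinite walk in $\tilde G$ shadowing repeated traversals of $c$, and then extract a simple cycle by pigeonhole on finite fibers, exploiting the periodicity of $\pi$ to force the extracted cycle to have length divisible by $k$.

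Concretely, in the direction-respecting case I write $c$ as
$$v_0 \trans{e_0} v_1 \trans{e_1} \cdots \trans{e_{k-1}} v_0$$
with $s(e_i) = v_i$, $t(e_i) = v_{(i+1)\bmod k}$, and pick an arbitrary $\tilde v_0 \in \pi^{-1}(v_0)$ (nonempty by surjectivity of $\pi$). Inductively, given $\tilde v_i \in \pi^{-1}(v_{i \bmod k})$, the directed emulator property applied to the edge $e_{i\bmod k}$ at the point $\tilde v_i$ yields an edge $\tilde e_i$ of $\tilde G$ with source $\tilde v_i$ and target some $\tilde v_{i+1} \in \pi^{-1}(v_{(i+1)\bmod k})$. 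This produces a directed walk $\tilde v_0 \trans{\tilde e_0} \tilde v_1 \trans{\tilde e_1} \cdots$ in $\tilde G$ whose $\pi$-image is $c$ traversed repeatedly, and each of whose edges respects the direction of $\tilde G$ by construction.

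Since $\bigcup_{i=0}^{k-1}\pi^{-1}(v_i)$ is finite, the walk must revisit some vertex. Letting $j$ be minimal with $\tilde v_j = \tilde v_i$ for some $i<j$, the initial segment $\tilde v_0 \cdots \tilde v_i$ is a simple path and $\tilde v_i \tilde v_{i+1} \cdots \tilde v_j$ is a simple directed cycle; both are contained in $\pi^{-1}(c)$. Its length $j - i$ is a multiple of $k$, since $\pi(\tilde v_\ell) = v_{\ell\bmod k}$ together with the simplicity of $c$ (so that $v_0, \ldots, v_{k-1}$ are pairwise distinct) forces $i \equiv j \pmod k$. The ``furthermore'' clause is now immediate, since every edge of the constructed walk respects the direction of $\tilde G$.

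The main obstacle in extending the argument to $c$ not respecting the direction of $G$ is that the iterative lifting stalls at vertices where $c$ reverses orientation relative to $G$: the emulator property guarantees only outgoing lifts, not incoming ones. I would handle this by decomposing $c$ into maximal monotone oriented arcs, lifting each by the walk construction above, and patching the lifts across reversal vertices using the surjectivity of $\pi$ on edges (noted after the definition of directed emulator) together with the finiteness of each fiber $\pi^{-1}(v_i)$; the resulting closed walk in $\pi^{-1}(c)$ would then contain a simple cycle whose length is again a multiple of $k$ by the same modular argument on $\pi$.
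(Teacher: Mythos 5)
Your proof of the direction-respecting case is correct and is essentially the paper's argument: lift $c$ edge by edge using the emulator property, invoke finiteness of the fibres to find a first repeated vertex, extract the (vertex-disjoint, hence simple) cycle there, and deduce that its length is a multiple of $k$. If anything, your extraction at the first repeat is cleaner than the paper's ``closed lift of minimal length'' argument, and your modular computation is a correct substitute for the paper's assertion that ``$\tilde{c}$ covers $c$''. One small caveat, shared with the paper: the divisibility step needs $v_0,\dots,v_{k-1}$ pairwise distinct, which the paper's edge-based definition of ``simple cycle'' does not literally guarantee; both proofs implicitly assume $c$ is vertex-simple.

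The only genuine divergence is your final paragraph. The paper's proof does not handle reversal vertices either --- it lifts each $e_i = v_i v_{i+1}$ ``starting at any lift of $v_i$'', which the emulator property licenses only for edges oriented out of $v_i$ --- so you were right to flag the issue; but your proposed patch cannot work. Surjectivity of the induced map on edges only yields \emph{some} lift of a backward edge, with no control over which vertex of the fibre it attaches to, and in fact the conclusion fails for cycles not respecting the direction: take $G$ with vertices $a,b$ and two parallel edges $a\to b$, whose undirected $2$-cycle is simple, and the emulator $\tilde{G}$ with one lift $a'$ of $a$, two lifts $b_1',b_2'$ of $b$, and edges $a'\to b_1'$ and $a'\to b_2'$ lifting the two required outgoing edges; this $\tilde{G}$ is a tree and $\pi^{-1}(c)$ contains no cycle at all. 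So treat the lemma as established only for direction-respecting $c$ --- which is the only case the paper's own proof actually covers.
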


\begin{proof}
With loss of generality, we can work in the category of simple digraphs. Let $c = v_{1} \cdots v_{n}$ where the $v_{i}$'s denote the vertices. Choose an arbitrary lift $v_{1}'$ of $v_{1}$. Since $\pi$ is a directed emulator map, each edge $e_{i} = v_{i}v_{i+1}$ of $c$ has a lift starting at any lift $v_{i}'$ of $v_{i}$. Lifting each edge of $c$ in this fashion, we obtain a path $c'=v_{1}' \cdots v_{n}'v_{1}''$ whose initial and final vertex lie in the same fibre: $\pi(v_{1}') = \pi(v_{1}'') = v_{1}$. If the initial and final vertices of $c'$ coincide, we stop and $c'$ is a simple cycle. Otherwise, starting again with $v_{1}''$ as a lift of $v_{1}$, we continue the process of lifting edges until we reach a first vertex $w$ that has already been reached. This implies that there is a path in $\tilde{G}$, which lifts $c$, starts and ends at $w$. Among the closed paths lifting $c$, let $\tilde{c}$ be closed path of minimal length with this property. Suppose that $\tilde{c}$ is not simple. Then there is an edge $e'$ of $\tilde{c}$ which is repeated, which implies that there are two pairs of vertices that are repeated, contradicting that $\tilde{c}$ has minimal length. Thus $\tilde{c}$ is simple. Since moreover $\tilde{c}$ covers $c$, the length of $\tilde{c}$ is a multiple of that of $c$.
\end{proof}


\subsection{Toric languages on a two-letter alphabet} \label{sec:proof-prop-toric-two-letters}

This section is devoted to the proof of Prop. \ref{prop:toric-two-letters}. Since the minimal automaton $\A_{\rm{min}}$ of $L_{n,p}$ embeds in a torus, $g(L_{n,p}) \leq 1$. We have to prove that $L_{n,p}$ is nonplanar, i.e. $g(L_{n,p}) \geq 1$, for $n, p \geq 4$. Let $\A$ be a finite deterministic automaton for $L_{n,p}$. By Prop. \ref{prop:functor-automata-emulator}, $\A$ over $\A_{\rm{min}}$ induces a directed emulator map $\pi: \widetilde{\mathscr{G}}(\A) \to \widetilde{\mathscr{G}}(\A_{\rm{min}})$. Since $n,p \geq 4$, $\widetilde{\mathscr{G}}(\A_{\rm{min}})$ has no simple cycle of length $\leq 3$. Applying Lemma \ref{lem:nosimplecycle}, we see that neither has ${\widetilde{\mathscr{G}}}(\A)$. Consider now a minimal embedding of $\A$ into some closed oriented surface $\Sigma$. 

\noindent {\emph{Claim.}} Each face of the embedding has length at least $4$. 

\noindent {\emph{Proof of the claim}}. Suppose the contrary. There is a face $f$ of length $\leq 3$. The boundary of $f$ must be a nonsimple cycle $c = \partial f$. Since there is no self-loop in $\A$, $c$ has length $2$ and consists in exactly one edge $e$ with immediate backtracking. It follows that $e$ is monofacial. Let $\vec{e}$ be the original oriented edge in $\A$. Since $e$ is monofacial, one of the endpoints, $s(\vec{e})$ or $t(\vec{e})$ has total degree $1$, which is a contradiction.

For $j \geq 1$, let $f_{j}$ be the number of faces of length $j$.
By the claim above, $f_1 = f_2 = f_3 = 0$. Therefore, according to the genus formula \cite[Th. 5]{BD},
$$ g(L) = g(\A) = 1 + \sum_{j \geq 1} \frac{j-4}{8}f_{j} = 1 + \sum_{j \geq 4} \frac{j-4}{8} f_{j} \geq 1.$$
This is the desired result.

\begin{remark}
Inspection of the proof shows that if $L$ is a two-letter language
that does not have any simple cycle of length $\leq 3$, then $g(L) \geq 1$.
\end{remark}

\vspace{0.2cm}

\noindent {\textbf{Acknowlegments}}. This work was partially supported by the CIMI (Centre International de Math\'ematiques et d'Informatique), ANR-11-LABX-0040-CIMI within the program ANR-11-IDEX-0002-02 and by ANR-14-CE25-0005-3, ELICA.

\bibliographystyle{alpha}
\bibliography{NGLR}

\vspace{0.2cm}

\noindent G. Bonfante, LORIA, Universit\'e de Lorraine, Campus scientifique, B.P.~239, 54506 Vandoeuvre-l\`es-Nancy Cedex

\vspace{0.2cm}

\noindent {\tt{guillaume.bonfante@loria.fr}}\\

\vspace{0.2cm}
\noindent F. Deloup, IMT, Universit\'e Paul Sabatier -- Toulouse III, 118 Route de Narbonne, 31069 Toulouse cedex 9

\vspace{0.2cm}

\noindent {\tt{florian.deloup@math.univ-toulouse.fr}}\\
%

\end{document}